\DeclareMathOperator{\EX}{\mathbb{E}}
\newcommand{\diff}{\mathop{}\!d}
\newcommand{\indep}{\perp \!\!\! \perp}
\newcommand*{\QEDB}{\null\nobreak\hfill\ensuremath{\square}}%
\newtheorem{theorem}{Theorem}
\newtheorem{lemma}{Lemma}
\newtheorem{proposition}{Proposition}
\begin{document}

\title{Tests of exogeneity in duration models with censored data}

\author{$\text{Gilles Crommen}^1$ \\ \url{gilles.crommen@kuleuven.be} \and $\text{Jean-Pierre Florens}^{2}$ \\ \url{jean-pierre.florens@tse-fr.eu} \and $\text{Ingrid Van Keilegom}^1$ \\ \url{ingrid.vankeilegom@kuleuven.be}}

\date{$^1$ORSTAT, KU Leuven, Naamsestraat 69, 3000 Leuven, Belgium \\ $^2$Toulouse School of Economics, Esplanade de l’Université 1, 31080 Toulouse, France}

\maketitle

\begin{abstract}
\noindent Consider the setting in which a researcher is interested in the causal effect of a treatment $Z$ on a duration time $T$, which is subject to right censoring.
We assume that $T=\varphi(X,Z,U)$, where $X$ is a vector of baseline covariates, $\varphi(X,Z,U)$ is strictly increasing in the error term $U$ for each $(X,Z)$ and $U\sim \mathcal{U}[0,1]$. Therefore, the model is nonparametric and nonseparable. We propose nonparametric tests for the hypothesis that $Z$ is exogenous, meaning that $Z$ is independent of $U$ given $X$. The test statistics rely on an instrumental variable $W$ that is independent of $U$ given $X$. We assume that $X,W$ and $Z$ are all categorical. Test statistics are constructed for the hypothesis that the conditional rank $V_T= F_{T \mid X,Z}(T \mid X,Z)$ is independent of $(X,W)$ jointly. Under an identifiability condition on $\varphi$, this hypothesis is equivalent to $Z$ being exogenous. However, note that $V_T$ is censored by $V_C =F_{T \mid X,Z}(C \mid X,Z)$, which complicates the construction of the test statistics significantly. We derive the limiting distributions of the proposed tests and prove that our estimator of the distribution of $V_T$ converges to the uniform distribution at a rate faster than the usual parametric $n^{-1/2}$-rate. We demonstrate that the test statistics and bootstrap approximations for the critical values have a good finite sample performance in various Monte Carlo settings. Finally, we illustrate the tests with an empirical application to the National Job Training Partnership Act (JTPA) Study.
\end{abstract}

\maketitle

\section{Introduction}

We consider the setting in which a researcher is interested in identifying the causal effect of a treatment variable $Z$ on a duration of interest $T$. Note that $Z$ could be a policy intervention, a new medical treatment or other forms of exposure. A central challenge in this context arises from the problem of endogeneity, which refers to $Z$ not being independent of the error term in the structural model for $T$. Endogeneity can result from a variety of sources such as unobserved confounders, sample selection, measurement error and noncompliance \citep{rubin1974estimating,HeckmanJamesJ.1979SSBa,angrist1996identification}. When $Z$ is endogenous, standard estimation methods can be severely biased. Because much applied work relies on observational data, addressing endogeneity should be a necessary step in any empirical analysis.

When the treatment $Z$ is exogenous, meaning independent of the error term in the structural model for the outcome $T$, estimating its causal effect is a well-posed problem for which standard estimation approaches are well-suited. By contrast, when $Z$ is endogenous, estimation of the treatment effect becomes more challenging. A common approach to deal with endogeneity is the use of instrumental variable (IV) methods. This approach relies on external sources of variation (instruments) that influence $Z$, but are independent of the unobserved heterogeneity affecting both $Z$ and $T$. However, with nonparametric IV methods, the estimation problem often takes the form of an ill-posed inverse problem \citep{newey2003instrumental,chernozhukov2005iv,darolles2011nonparametric,blundell2013nonparametric}. This means that nonparametric estimation with instruments is more convoluted and unstable compared to standard estimation methods such as ordinary least squares or a general quantile estimator. As a result, researchers often focus on simpler estimands, such as averages, rather than attempting to recover full functional relationships. These issues highlight the importance of having a statistical test for the hypothesis that $Z$ is exogenous.

We consider the following nonparametric nonseparable model
\begin{equation}\label{nonseperablemodel}
    T=\varphi(X,Z,U),
\end{equation}
where the map $u \mapsto \varphi(X,Z,u)$ is strictly increasing for each $(X,Z)$ and $U \sim \mathcal{U}[0,1]$. The variable $T$ is a duration of interest, $Z$ is a possibly endogenous treatment variable and $X$ represents a vector of baseline covariates. Note that if $\varphi(X,Z,\cdot)$ is strictly increasing for each $(X,Z)$, the rank invariance assumption as described by \cite{chernozhukov2005iv} is implied \citep{dong2018testing}. The rank invariance assumption is a condition on the potential outcomes and implies that individuals have the same unobserved rank when their treatment changes. Further, we will suppose that there exists an instrument $W$ that is independent of $U$ given $X$. We also allow for the duration time $T$ to be censored by a right censoring time $C$. Therefore, we only observe the follow-up time $Y=\min\{T,C\}$ and the censoring indicator $\Delta=\mathbbm{1}(Y=T)$ alongside $X,W$ and $Z$. We will assume that $X,W$ and $Z$ are categorical variables with finite support to facilitate the construction of the test statistics. Moreover, it is assumed that $C$ is independent of $T$ and $W$ jointly given $X$ and $Z$. The goal of this paper is to develop nonparametric tests for the hypothesis that $Z$ is exogenous, meaning that $Z$ is independent of $U$ conditional on $X$, without having to estimate the function $\varphi$.

There are two main approaches to test for exogeneity in a nonparametric nonseparable model. The first is based on estimating the function $\varphi$ under the identification assumptions of the nonparametric IV model, and comparing the estimate to general regression or quantile estimates. The second approach is to estimate the residuals under the exogeneity assumption and verify if the IV condition is satisfied. We will follow the second approach to avoid the difficulties associated with nonparametric IV estimation. This puts us in a similar framework to \cite{feve2018estimation}, who developed a nonparametric test for exogeneity based on an estimator of the distribution of the conditional rank $F_{T\mid Z}(T\mid Z)$. Contrary to \cite{feve2018estimation}, we allow for $T$ to be subject to right censoring. Moreover, we also allow for the inclusion of baseline covariates in model \eqref{nonseperablemodel}. These generalizations introduce significant challenges during the construction of the test statistics. Note that if $T$ is right censored by $C$, this implies that $V_T=F_{T\mid X,Z}(T\mid X,Z)$ is right censored by $V_C=F_{T\mid X,Z}(C\mid X,Z)$. Therefore, conditional on $F_{T\mid X,Z}$, we only observe $V = \min\{V_T,V_C\}$ and $\Delta=\mathbbm{1}(Y=T)$. Because of this, also the estimated  $\widehat{V}_T=\widehat{F}_{T\mid X,Z}(T\mid X,Z)$ is subject to right censoring, which complicates the estimation of the distribution of $V_T$ significantly. Whereas \cite{feve2018estimation} permits $W$ and $Z$ to be continuous, we restrict $X,W$ and $Z$ to be categorical. This restriction has the advantage that we do not need to smooth over the covariates, treatment and duration time when estimating $F_{T\mid X,Z}$. Under suitable identification conditions on $\varphi$, we show that the independence between $V_T$ and $(X,W)$ is equivalent to $Z$ being independent of $U$ given $X$. Other tests of exogeneity for censored duration outcomes have been proposed in the literature by \cite{smith1986exogeneity} and \cite{rivers1988limited} among others. However, these tests assume fixed and fully observed censoring times in Tobit models. To the best of our knowledge, no other test for exogeneity in a nonparametric nonseparable model that allows for random right censoring is available in the literature.

The nonparametric nonseparable model is a common framework for analyzing the effects of endogenous treatments on censored duration outcomes. Several nonparametric contributions that focus on estimating local average treatment effects under a monotonicity assumption include \cite{frandsen2015treatment}, \cite{sant2016program} and \cite{blanco2020bounds}, while \cite{chernozhukov2015quantile} and \cite{beyhum2022nonparametric} address identification and estimation of population-level treatment effects under a type of rank invariance assumption. See \cite{wuthrich2020comparison} for a comparison of rank invariance to monotonicity for the identification of quantile treatment effects. Another strand of the literature achieves identification under separability assumptions \citep{abbring2003nonparametric,abbring2005social,bijwaard2005correcting}. Some semiparametric approaches include \cite{tchetgen2015instrumental}, \cite{li2015instrumental}, \cite{chan2016reader} \cite{kianian2021causal}, \cite{beyhum2024instrumental} and \cite{tedesco2025instrumental} among others. 

The remainder of the paper is organized as follows. Section \ref{sectests} discusses under which conditions exogeneity of $Z$ is equivalent to the independence between $V_T$ and $(X,W)$, how the test statistics are constructed and explains the estimation procedure. The test statistics' limiting distributions and two bootstrap approaches to approximate the critical values are given in Section \ref{secdisttets}. The finite-sample performance is investigated in Section \ref{secfinitesample} through Monte Carlo simulations and Section \ref{secempapp} provides an empirical application to the National Job Training Partnership Act (JTPA) Study. Section \ref{secconclusion} discusses possible extensions and practical considerations. The technical proofs can be found in the Appendix.

\section{The test statistics}\label{sectests}
In this paper, we develop nonparametric tests for the null hypothesis that $$H_0 : V_T \indep (X,W),$$ where $
V_T = F_{T \mid X,Z}(T \mid X,Z)$ with $F_{T \mid X,Z}(t \mid x,z) = \mathbbm{P}(T \leq t\mid X=x,Z=z)$, $X$ is a vector of baseline covariates and $W$ is an instrumental variable such that $W \indep U\mid X$ with $\indep$ denoting statistical independence. Because $T$ is right censored by $C$, $V_T$ is right censored by $V_C = F_{T \mid X,Z}(C \mid X,Z)$. Therefore, conditionally on $F_{T \mid X,Z}$, we only observe $V=F_{T \mid X,Z}(Y \mid X,Z)=\min\{V_T,V_C\}$ and $\Delta = \mathbbm{1}(Y=T)$. Note that $T$ and $V_T$ both being subject to right censoring introduces significant challenges in constructing the test statistics. We now introduce the following model condition:
\begin{enumerate}[label=(A0),resume,left=0.25\leftmargin]
\item Model \eqref{nonseperablemodel} is identified. Moreover, $\varphi(X,Z,U)$ is strictly increasing in $U$ for each $(X,Z)$, $U \sim \mathcal{U}[0,1]$ and $W$ is independent of $U$ conditionally on $X$.\label{globidentass}
\end{enumerate}
The model being identified means that if
$$ T=\varphi_1(X,Z,U_1) = \varphi_2(X,Z,U_2), $$ 
and $U_1$ and $U_2$ are both independent of $W$ conditionally on $X$ and uniform on $[0,1]$, then
$$ U_1=U_2 \text{ a.s.\ and } \varphi_1=\varphi_2. $$ 
Note that $U\sim \mathcal{U}[0,1]$ can be assumed without loss of generality as long as $U$ is continuously distributed and has positive density on its support. The equivalence between $V_T \indep (X,W)$ and $Z$ being exogenous, meaning that $Z \indep U \mid X$, is established by the following proposition.
\begin{proposition}\label{proptest}
    Assume \ref{globidentass}, then  $Z \indep U \mid X$ { $\Longleftrightarrow$} {$W \indep V_T \mid X$} { $\Longleftrightarrow$} $V_T \indep (X,W)$.
\end{proposition}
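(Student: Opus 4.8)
The plan is to funnel both equivalences through a single identity relating the conditional rank of $T$ to that of the structural error. Writing $T=\varphi(X,Z,U)$ and using that $u\mapsto\varphi(x,z,u)$ is strictly increasing, for any $(x,z)$ in the support one has $F_{T\mid X,Z}(\varphi(x,z,u)\mid x,z)=\mathbb{P}(U\le u\mid X=x,Z=z)=F_{U\mid X,Z}(u\mid x,z)$, so that $V_T=F_{U\mid X,Z}(U\mid X,Z)$ a.s. Moreover, strict monotonicity of $\varphi(x,z,\cdot)$ together with continuity of the conditional law of $U$ implies $T\mid(X,Z)=(x,z)$ has no atoms, so the probability integral transform gives $V_T\mid(X,Z)=(x,z)\sim\mathcal{U}[0,1]$ for every $(x,z)$; averaging over $Z$ then yields $V_T\mid X=x\sim\mathcal{U}[0,1]$ for every $x$, hence $V_T\indep X$ unconditionally.

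Given this, the equivalence $W\indep V_T\mid X\Longleftrightarrow V_T\indep(X,W)$ is immediate: the reverse direction is trivial, and since $V_T\mid X=x\sim\mathcal{U}[0,1]$ does not depend on $x$, the conditional independence $W\indep V_T\mid X$ forces $V_T\mid(X,W)=(x,w)\sim\mathcal{U}[0,1]$ for all $(x,w)$, which is precisely $V_T\indep(X,W)$. For the forward implication $Z\indep U\mid X\Longrightarrow W\indep V_T\mid X$: exogeneity together with $U\mid X\sim\mathcal{U}[0,1]$ gives $U\mid(X,Z)=(x,z)\sim\mathcal{U}[0,1]$, so $F_{U\mid X,Z}(u\mid x,z)=u$, and the identity above yields $V_T=U$ a.s.; since $W\indep U\mid X$ by \ref{globidentass}, this gives $W\indep V_T\mid X$.

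The remaining implication $V_T\indep(X,W)\Longrightarrow Z\indep U\mid X$ is the main obstacle, and it is where the identifiability part of \ref{globidentass} enters. From $V_T\indep(X,W)$ we have $W\indep V_T\mid X$. Let $\psi(x,z,\cdot)=F^{-1}_{T\mid X,Z}(\cdot\mid x,z)$ be the (strictly increasing) conditional quantile function; by continuity of the conditional CDF, $\psi(X,Z,V_T)=T$ a.s., so $T=\psi(X,Z,V_T)$ is a second structural representation of the model whose error $V_T$ is uniform on $[0,1]$ and, by hypothesis, independent of $W$ given $X$. The original representation $T=\varphi(X,Z,U)$ has these same properties by \ref{globidentass}, so the identifiability condition, applied to $\varphi(X,Z,U)=T=\psi(X,Z,V_T)$, forces $U=V_T$ a.s. Combined with $V_T=F_{U\mid X,Z}(U\mid X,Z)$ a.s., this gives $F_{U\mid X,Z}(u\mid x,z)=u$ for $(X,Z,U)$-a.e.\ $(x,z,u)$; since $U\mid(X,Z)=(x,z)$ inherits absolute continuity from $U\mid X\sim\mathcal{U}[0,1]$ (as $Z$ is categorical), a short argument ruling out gaps in its support upgrades this to $U\mid(X,Z)=(x,z)\sim\mathcal{U}[0,1]$ for a.e.\ $(x,z)$, and since this conditional law then does not depend on $z$ we conclude $Z\indep U\mid X$, closing the cycle. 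The only delicate points are the a.s.\ validity of the probability-integral-transform and quantile-inverse identities and the support argument in this last step; the rest is bookkeeping.
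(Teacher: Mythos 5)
Your argument is correct and follows essentially the same route as the paper: establish $U=V_T$ a.s.\ in each direction (via the probability integral transform under exogeneity, and via the identifiability condition applied to the second representation $T=F^{-1}_{T\mid X,Z}(V_T\mid X,Z)$ in the converse), then chain the conditional-independence statements. Your final step is needlessly roundabout — once $U=V_T$ a.s., the fact you already established that $V_T\mid(X,Z)\sim\mathcal{U}[0,1]$ (hence $V_T\indep(X,Z)$) gives $Z\indep U\mid X$ immediately, with no need for the $F_{U\mid X,Z}(u\mid x,z)=u$ detour or the support argument.
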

\begin{proof}
    If $Z \indep U \mid X$, clearly $U = V_T$ a.s. and hence $W \indep V_T \mid X$. Since by construction $V_T \indep X$, we have that $V_T \indep (X,W).$ On the other hand, if $V_T \indep (X,W)$, it is clear that $W \indep V_T \mid X$ and by noticing that $T = F^{-1}_{T \mid X,Z}(V_T \mid X,Z)$ we can conclude that $U = V_T$ a.s. by the identification of model \eqref{nonseperablemodel}. Since $V_T$ is independent of $Z$ by construction, clearly $Z \indep U \mid X$.
\end{proof}
Proposition \ref{proptest} shows that while our independence test between $V_T$ and $(X,W)$ does not rely on Assumption \ref{globidentass}, the independence property can be interpreted as an exogeneity property under this assumption. 
The identification of model \eqref{nonseperablemodel}, which is nonparametric and nonseparable, using instrumental variables is a complex issue that has been studied by \cite{chesher2003identification}, \cite{chernozhukov2005iv}, \cite{chernozhukov2007instrumental} and \cite{chen2014local} among others.
A sufficient condition for the identification of model \eqref{nonseperablemodel} can be found in Appendix A of \cite{feve2018estimation}, which is a completeness type condition that has been proven to be nontestable \citep{canay2013testability}.

Under $H_0$, it is clear that \begin{equation}
    \mathbbm{P}(V_T \leq v, X = x, W =w) - \mathbbm{P}(V_T \leq v)\mathbbm{P}( X = x, W = w),\label{jointind}\end{equation} is equal to zero for all $ (v,x,w)$. Equivalently, we have that \begin{equation}\mathbbm{P}(V_T \leq v\mid X = x,W  = w)-\mathbbm{P}(V_T \leq v),\label{condind}\end{equation} is equal to zero for all $v$ and all $(x,w)$ for which $\mathbbm{P}(X=x,W=w) > 0$.
    While it is possible to construct our test statistics based on an estimator of \eqref{jointind}, $V_T$ being subject to right censoring by $V_C$ substantially complicates estimation of $\mathbbm{P}(V_T \leq v, X = x, W =w)$. Nonetheless, possible estimators for this joint distribution have been proposed by \cite{stute1993consistent} and \cite{akritas1994nearest}. The estimator proposed by \cite{stute1993consistent} is simple to implement, but it would require the additional assumption that $\mathbbm{P}(V_T \leq V_C \mid X,W,V_T)=\mathbbm{P}(V_T \leq V_C \mid V_T)$. On the other hand, the approach proposed by \cite{akritas1994nearest} requires the estimation of $\mathbbm{P}(V_T \leq v\mid X = x,W  = w)$ as a first step to estimate $\mathbbm{P}(V_T \leq v, X = x, W =w)$. Therefore, we will construct our test statistics based on a nonparametric estimator of $\eqref{condind}$. To facilitate the construction of the test statistics, we will make the following key assumptions:
\begin{enumerate}[label=(A\arabic*),left=0.25\leftmargin]
\item $X$, $W$ and $Z$ take values in the finite sets $\mathcal{R}_X = \{x_1,\dots,x_{d_x}\}, \mathcal{R}_W=\{w_1,\dots,w_{d_w}\}$ and $\mathcal{R}_Z=\{z_1,\dots,z_{d_z}\}$ respectively.\label{Asupport}
\item $C \indep (T,W) \mid X,Z$.\label{AindepTC}
\end{enumerate}
Even though assumption \ref{Asupport} might be restrictive in some settings, it will allow us to develop a tractable, fully nonparametric approach to test $H_0$. Moreover, it has the advantage that we do not need to smooth over the covariates, treatment and duration time when estimating $F_{T\mid X,Z}$. Note that under Assumption \ref{Asupport}, a necessary condition for Assumption \ref{globidentass} to hold is that $d_w \geq d_z$. Possible extensions to allow for continuous covariates, instruments and treatments are discussed in Section \ref{secconclusion}. Further, it follows that Assumption \ref{AindepTC} is equivalent to assuming that $(i)$ $C \indep W \mid X,Z$ and $(ii)$ $T \indep C \mid X,W,Z$. Clearly, $(i)$ implies that after conditioning on $(X,Z)$, the instrument should not affect the censoring time $C$. In particular, if $W$ indicates being randomized to a treatment or control group, this means that, conditional on the baseline covariates $X$ and treatment participation $Z$, being assigned to the treatment or control group should not influence the censoring time. Practitioners can check for violations of $(i)$ by comparing the censoring rates for different levels of $W$ conditional on $(X,Z)$. On the other hand, $(ii)$ implies that the duration and censoring times are independent given the baseline covariates, treatment assignment and treatment participation. Even though this assumption is widely adopted in the literature on censored duration outcomes, it is, in general, nontestable \citep{TsiatisA.1975ANAo} and its violation can lead to biased estimates \citep{moeschberger_consequences_1984}. Note that when $C$ is observed in addition to $Y$ and $\Delta$, \cite{frandsen2019testing} proposed a test for censoring point independence. In Section \ref{secconclusion}, we outline a possible extension to our approach that allows for some forms of dependence between $T$ and $C$ after conditioning on $(X,W,Z)$.

\subsection{Estimation}\label{secesttest}

For all $i=1,\dots,n$, define
$$
\widehat{V}_i = \widehat{F}_{T \mid X,Z}(Y_i \mid X_i,Z_i),
$$
where $\widehat{F}_{T \mid X,Z}$ is a conditional Kaplan-Meier estimator, that is,
$$
\widehat{F}_{T \mid X,Z}(t \mid x,z) =1-\prod_{i : Y_{(i)} \leq t} \left(1-\frac{d_{(i)}(x,z)}{r_{(i)}(x,z)}\right),
$$
with $Y_{(1)},\dots,Y_{(m)}$ the $m$ ordered distinct follow-up times, the number of events $d_{(i)}(x,z) = \sum_{j=1}^n\mathbbm{1}(Y_j = Y_{(i)},\Delta_j = 1,X_j=x, Z_j=z)$ and the risk set $r_{(i)}(x,z)= \sum_{j=1}^n\mathbbm{1}(Y_j \geq Y_{(i)},X_j=x, Z_j=z).$ Because $\widehat{F}_{T \mid X,Z}( \cdot \mid x,z)$ is a monotonically increasing function, we have that $$\widehat{V}_i = \widehat{F}_{T \mid X,Z}(\min\left\{T_i,C_i\right\} \mid X_i,Z_i) = \min\left\{\widehat{F}_{T \mid X,Z}(T_i \mid X_i,Z_i),\widehat{F}_{T \mid X,Z}(C_i \mid X_i,Z_i)\right\},$$
such that $\widehat{V}_{T_i} = \widehat{F}_{T \mid X,Z}(T_i \mid X_i,Z_i)$ is censored by $\widehat{V}_{C_i} = \widehat{F}_{T \mid X,Z}(C_i \mid X_i,Z_i)$ with the same censoring indicator $\Delta_i=\mathbbm{1}(Y_i=T_i)$. It is important to note that, since $V_T \indep X,Z$ by construction, Assumption \ref{AindepTC} implies that $V_T \indep V_C$.
Therefore, we can estimate the distribution function of $V_T=F_{T \mid X,Z}(T \mid X,Z)$ by a Kaplan-Meier estimator, plugging-in $\{\widehat{V}_i\}_{i=1,\dots,n}$ as the observed follow-up times. Specifically, let
$$
\widehat{F}_{\widehat{V}_T}(v)= 1-\prod_{i:\widehat{V}_{(i)} \leq v} \left(1-\frac{\widehat{d}_{(i)}}{\widehat{r}_{(i)}}\right),
$$
with $\widehat{V}_{(1)},\dots,\widehat{V}_{(k)}$ the $k$ ordered distinct estimated conditional ranks, $\widehat{d}_{(i)}=\sum_{j=1}^n\mathbbm{1}(\widehat{V}_j = \widehat{V}_{(i)},\Delta_j = 1)$ and $\widehat{r}_{(i)}=\sum_{j=1}^n\mathbbm{1}(\widehat{V}_j \geq \widehat{V}_{(i)})$. Lastly, it is important to note that we cannot estimate $F_{V_T\mid X,W}(v\mid x, w)=\mathbbm{P}(V_T \leq v\mid  X = x,W = w)$ by a conditional Kaplan-Meier estimator, since our assumptions do not imply $V_T \indep V_C \mid X,W$. However, Assumption \ref{AindepTC} does imply that $V_T \indep V_C \mid X,Z,W$. Therefore, we can estimate $F_{V_T\mid X,W,Z}(v\mid x, w,z)=\mathbbm{P}(V_T \leq v\mid X=x,W=w,Z=z)$ by a conditional Kaplan-Meier estimator and marginalize out $Z$, that is,
$$
\widehat{F}_{\widehat{V}_T\mid X,W}(v\mid x, w) =\sum_{z \in\mathcal{R}_Z}\widehat{F}_{\widehat{V}_T\mid X,W,Z}(v\mid x,w,z)\widehat{p}_{Z\mid X,W}(z\mid x,w),$$ 
where
$$
\widehat{p}_{Z\mid X,W}(z\mid x,w) = \frac{ \sum_{i=1}^n\mathbbm{1}(X_i=x,W_i=w,Z_i=z)}{\sum_{i=1}^n\mathbbm{1}(X_i=x,W_i=w)},
$$ is an estimator of $p_{Z\mid X,W}(z\mid x,w)=\mathbbm{P}(Z=z\mid X=x,W=w)$ and $$
\widehat{F}_{\widehat{V}_T\mid X,W,Z}(v\mid x, w,z) =1-\prod_{i:\widehat{V}_{(i)} \leq v} \left(1-\frac{\widehat{d}_{(i)}(x,w,z)}{\widehat{r}_{(i)}(x,w,z)}\right),
$$ with $\widehat{d}_{(i)}(x,w,z)= \sum_{j=1}^n\mathbbm{1}(\widehat{V}_j = \widehat{V}_{(i)},\Delta_j = 1,X_j = x, W_j = w, Z_j = z)$ and $\widehat{r}_{(i)}(x,w,z) = \sum_{j=1}^n\mathbbm{1}(\widehat{V}_j \geq \widehat{V}_{(i)}, X_j = x, W_j = w, Z_j = z).$
Finally, let $$\widehat{D}(v,x,w)=\widehat{F}_{\widehat{V}_T\mid X,W}(v\mid x,w) - \widehat{F}_{\widehat{V}_T}(v),$$
be an estimator for \eqref{condind}. To check the null hypothesis, we can compute the following Kolmogorov-Smirnov statistic:
\begin{equation}\label{eqteststatisticKS}
T^{KS}_{n}= \sqrt{n}\sup_{v \in \mathcal{I},(x,w) \in \mathcal{R}_{X,W}}\lvert \widehat{D}(v,x,w) \rvert,    
\end{equation}
or the following weighted Cramér-von Mises statistic:
\begin{equation}\label{eqteststatisticCM}
T^{CM}_{n}= n \sum_{(x,w) \in \mathcal{R}_{X,W}} \widehat{\pi}(x,w) \int_{\mathcal{I} } \widehat{D}^2(v,x,w) \diff\widehat{F}_{\widehat{V}_T\mid X,W}(v\mid x,w),    
\end{equation}
where $\widehat{\pi}(x,w)$ is a specified weight function, $\mathcal{R}_{X,W} = \{(x,w) \in \mathcal{R}_X \times \mathcal{R}_W : \mathbbm{P}(X=x,W=w) > 0 \}$ and $\mathcal{I} \subseteq [0,1-\gamma]$ for some small $\gamma > 0$ that will be defined by Assumption \ref{supportass}. For example, one may take the weight function $\widehat{\pi}(x,w)$ of the Cramér-von Mises statistic to be constant or to equal the empirical proportion $n^{-1}\sum_{i=1}^n\mathbbm{1}(X_i=x,W_i=w)$. Note that $\gamma$ being strictly positive is only necessary for the asymptotic theory. In Section \ref{secfinitesample}, we will set $\gamma = 0$ and show that the test still performs well using Monte Carlo simulations.

\section{Distribution of test statistics}\label{secdisttets}
Before stating the main theorems regarding the test statistics' limiting distributions, we need to introduce some more notation and assumptions. Firstly, let 
\begin{align*}
 & p_{X,W}(x,w)=\mathbbm{P}(X=x,W=w), \quad p_{W\mid V_T,X,Z}(w\mid v,x,z) = \mathbbm{P}(W=w\mid V_T=v,X=x,Z=z), \\ &S_{V_T}(v) = 1 - F_{V_T}(v),  \quad S_{V_T\mid X,W,Z}(v\mid x,w,z) = 1- F_{V_T\mid X,W,Z}(v\mid x,w,z), \\ & S_{V,1\mid X,W,Z}(v\mid x,w,z) = \mathbbm{P}(F_{T \mid X,Z}(Y \mid X,Z) > v,\Delta =1 \mid X=x,W=w,Z=z), \\ &    S_{V\mid X,W,Z}(v\mid x,w,z) = \mathbbm{P}(F_{T \mid X,Z}(Y \mid X,Z) > v \mid X=x,W=w,Z=z),    
\end{align*}
and similarly for $S_{V,1\mid X,Z}$ and $S_{V\mid X,Z}$. Moreover, let  $$\zeta_i(v,x,w,z) = \int^v_0\frac{N_{i,x,w,z}(u)dS_{V,1\mid X,W,Z}(u\mid x,w,z)}{S_{V\mid X,W,Z}(u\mid x,w,z)^2}-\int^v_0\frac{dN^1_{i,x,w,z}(u)}{S_{V\mid X,W,Z}(u\mid x,w,z)},$$ and $$\xi_i(u,x,z) = \int^u_0\frac{N_{i,x,z}(s)dS_{V,1\mid X,Z}(s\mid x,z)}{S_{V\mid X,Z}(s\mid x,z)^2}-\int^u_0\frac{dN^1_{i,x,z}(s)}{S_{V\mid X,Z}(s\mid x,z)},$$
with $N_{i,x,w,z}(v)= \mathbbm{1}(V_i\geq v,X_i=x,W_i = w,Z_i = z)$, $N^1_{i,x,w,z}(v)=\mathbbm{1}(V_i\geq v, \Delta_i = 1,X_i=x,W_i=w,Z_i = z)$ and similarly for $N_{i,x,z}(v)$ and $N^1_{i,x,z}(v)$. Lastly, we will need the following regularity assumptions:
\begin{enumerate}[label=(A\arabic*),resume,left=0.25\leftmargin]
\item For all $(x,w,z) \in \mathcal{R}_X \times \mathcal{R}_W \times \mathcal{R}_Z $, the conditional distribution function $F_{Y \mid X,W,Z}(y\mid x,w,z)=\mathbbm{P}(Y\leq y \mid X=x,W=w,Z=z)$, with $f_{Y\mid X,W,Z}(y\mid x,w,z)$ the respective conditional density function, is twice continuously differentiable with respect to $y$, such that
 $$ \sup_{y}\left\lvert \frac{\partial^k}{\partial y^k}F_{Y\mid X,W,Z}(y\mid x,w,z)\right\rvert < \infty,
$$
for $k=1,2$. Moreover, for all $(x,w,z) \in \mathcal{R}_X \times \mathcal{R}_W \times \mathcal{R}_Z $, we have that
$$
\sup_{0 \leq F_{T\mid X,Z}(y\mid x,z)\leq1-\gamma}\frac{f_{Y\mid X,W,Z}(y\mid x,w,z)}{f_{T\mid X,Z}(y\mid x,z)} < \infty,
$$where $\gamma >0$ is such that $F_{Y \mid X,W,Z}(F^{-1}_{T\mid X,Z}(1-\gamma\mid x,z)\mid x,w,z)<1$ for all $(x,w,z) \in \mathcal{R}_X \times \mathcal{R}_W \times \mathcal{R}_Z $ and $f_{T\mid X,Z}(y\mid x,z)$ is the conditional density function of $T$ given $X=x$ and $Z=z$. \label{supportass}
\item We have that
$$
\sup_{(x,w) \in \mathcal{R}_{X,W}}\left\lvert \widehat{\pi}(x,w) -\pi(x,w) \right\rvert \xrightarrow{a.s.}  0, 
$$ for some function $\pi(x,w)$.\label{ascoonvpi}
\end{enumerate}
We now have the following result.
\subsection{Limiting distributions}\label{asymptoticsection}
\begin{theorem}\label{TH1}
Under Assumptions \ref{Asupport} - \ref{supportass}, we have that
     \begin{align*}
     \widehat{D}(v,x,w) = n^{-1}\sum_{i=1}^n\frac{\omega_i(v,x,w)}{p_{X,W}(x,w)} +F_{V_T\mid X,W}(v\mid x,w) -v +o_p(n^{-1/2}),  
    \end{align*}
    uniformly in $(v,x,w) \in \mathcal{I} \times \mathcal{R}_{X,W}$, where
    \begin{align*}
         \omega_i(v,x,w) & = S_{V_T\mid X,W,Z}(v\mid x,w,Z_i)\zeta_i(v,x,w,Z_i) -p_{W\mid V_T,X,Z}(w\mid v,x,Z_i)S_{V_T}(v)\xi_{i}(v,x,Z_i)   \\ & \quad +\mathbbm{1}(X_i=x,W_i=w)\left[F_{V_T\mid X,W,Z}(v\mid x,w,Z_i)-F_{V_T\mid X,W}(v\mid x,w)\right].
    \end{align*}
\end{theorem}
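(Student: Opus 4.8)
The plan is to linearize $\widehat D(v,x,w)$ by treating its two pieces $\widehat F_{\widehat V_T\mid X,W}(v\mid x,w)$ and $\widehat F_{\widehat V_T}(v)$ separately, isolating throughout the contribution of the first-step error in the estimated ranks $\widehat V_i=\widehat F_{T\mid X,Z}(Y_i\mid X_i,Z_i)$. Write
\begin{align*}
\widehat D(v,x,w)&=\bigl[\widehat F_{\widehat V_T\mid X,W}(v\mid x,w)-F_{V_T\mid X,W}(v\mid x,w)\bigr]\\
&\quad-\bigl[\widehat F_{\widehat V_T}(v)-v\bigr]+\bigl[F_{V_T\mid X,W}(v\mid x,w)-v\bigr],
\end{align*}
so that the last bracket is exactly $\Omega(v,x,w)$, and it remains to obtain an i.i.d.\ expansion of the first bracket and to show that the second bracket is $o_p(n^{-1/2})$ uniformly in $v\in\mathcal I$. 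The latter is the ``faster than $n^{-1/2}$'' statement, which I would establish as a separate lemma: within each cell $(x,z)$ the ranks are generated by the \emph{same} conditional Kaplan--Meier estimator used to define them, so that $\widehat F_{\widehat V_T\mid X,Z}(v\mid x,z)=\widehat F_{T\mid X,Z}\bigl(\widehat F^{-1}_{T\mid X,Z}(v\mid x,z)\,\big|\,x,z\bigr)=v$ up to $O_p(1/n)$; combined with $V_T\indep V_C$ (which follows from Assumption \ref{AindepTC} since $V_T\indep(X,Z)$ by construction), this forces the pooled estimator $\widehat F_{\widehat V_T}$ to reproduce the identity up to a uniformly $o_p(n^{-1/2})$ remainder.

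For the first bracket I would split $\widehat F_{\widehat V_T\mid X,W}$ using its definition as
\begin{align*}
&\sum_{z}p_{Z\mid X,W}(z\mid x,w)\bigl[\widehat F_{\widehat V_T\mid X,W,Z}(v\mid x,w,z)-F_{V_T\mid X,W,Z}(v\mid x,w,z)\bigr]\\
&\quad+\sum_{z}F_{V_T\mid X,W,Z}(v\mid x,w,z)\bigl[\widehat p_{Z\mid X,W}(z\mid x,w)-p_{Z\mid X,W}(z\mid x,w)\bigr]
\end{align*}
up to an $O_p(n^{-1})$ product remainder. The multinomial term has the obvious i.i.d.\ expansion, $\widehat p_{Z\mid X,W}(z\mid x,w)-p_{Z\mid X,W}(z\mid x,w)=n^{-1}p_{X,W}(x,w)^{-1}\sum_i\mathbbm{1}(X_i=x,W_i=w)\{\mathbbm{1}(Z_i=z)-p_{Z\mid X,W}(z\mid x,w)\}+o_p(n^{-1/2})$, which after summing over $z$ against $F_{V_T\mid X,W,Z}$ collapses to the third line of $\omega_i/p_{X,W}(x,w)$. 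For the conditional-Kaplan--Meier term I would exploit the key identity that, since $\widehat V_j$ is a monotone transform of $Y_j$ within cell $(x,z)$, $\widehat F_{\widehat V_T\mid X,W,Z}(v\mid x,w,z)$ equals the conditional Kaplan--Meier estimator of $Y$ given $(x,w,z)$, call it $\widehat F^{\mathrm{KM}}_{Y\mid X,W,Z}$ (consistent for $F_{T\mid X,W,Z}$ under Assumption \ref{AindepTC}), evaluated at $\widehat q_{x,z}(v):=\widehat F^{-1}_{T\mid X,Z}(v\mid x,z)$, while $F_{V_T\mid X,W,Z}(v\mid x,w,z)=F_{T\mid X,W,Z}(q_{x,z}(v)\mid x,w,z)$ with $q_{x,z}(v)=F^{-1}_{T\mid X,Z}(v\mid x,z)$.

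Decomposing this difference as $[\widehat F^{\mathrm{KM}}_{Y\mid X,W,Z}(q_{x,z}(v))-F_{T\mid X,W,Z}(q_{x,z}(v))]+[\widehat F^{\mathrm{KM}}_{Y\mid X,W,Z}(\widehat q_{x,z}(v))-\widehat F^{\mathrm{KM}}_{Y\mid X,W,Z}(q_{x,z}(v))]$ separates the ``oracle'' contribution from that of estimating the inverse. The first term is the standard i.i.d.\ (martingale) representation of the conditional Kaplan--Meier estimator at a fixed point; passing to the $V$-scale (using $F_{T\mid X,Z}(q_{x,z}(v)\mid x,z)=v$, $1-F_{T\mid X,W,Z}(q_{x,z}(v)\mid x,w,z)=S_{V_T\mid X,W,Z}(v\mid x,w,z)$, and that $S_{V,1\mid X,W,Z},S_{V\mid X,W,Z}$ are the images of the $Y$-scale subdistributions) turns it, after weighting by $p_{Z\mid X,W}$ and summing over $z$, into the first line of $\omega_i/p_{X,W}(x,w)$. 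The second term I would handle by asymptotic equicontinuity of $\widehat F^{\mathrm{KM}}_{Y\mid X,W,Z}$ around $F_{T\mid X,W,Z}$ together with a Bahadur expansion $\widehat q_{x,z}(v)-q_{x,z}(v)=-[\widehat F_{T\mid X,Z}(q_{x,z}(v)\mid x,z)-v]/f_{T\mid X,Z}(q_{x,z}(v)\mid x,z)+o_p(n^{-1/2})$: it reduces to $-\frac{f_{T\mid X,W,Z}(q_{x,z}(v)\mid x,w,z)}{f_{T\mid X,Z}(q_{x,z}(v)\mid x,z)}\bigl[\widehat F_{T\mid X,Z}(q_{x,z}(v)\mid x,z)-v\bigr]+o_p(n^{-1/2})$; Bayes' rule rewrites the density ratio as $p_{W\mid V_T,X,Z}(w\mid v,x,z)/p_{W\mid X,Z}(w\mid x,z)$, the $(x,z)$-conditional Kaplan--Meier representation (again in the $V$-scale) gives $\widehat F_{T\mid X,Z}(q_{x,z}(v)\mid x,z)-v=S_{V_T}(v)\,n^{-1}p_{X,Z}(x,z)^{-1}\sum_i\xi_i(v,x,z)+o_p(n^{-1/2})$, and the identity $p_{Z\mid X,W}(z\mid x,w)=p_{W\mid X,Z}(w\mid x,z)\,p_{X,Z}(x,z)/p_{X,W}(x,w)$ collapses the $z$-sum into the second line of $\omega_i/p_{X,W}(x,w)$. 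The signs are pinned down by the standard Kaplan--Meier representation and the leading minus sign in the quantile expansion. Uniformity in $v$ is secured by the restriction $\mathcal I\subseteq[0,1-\gamma]$ and Assumption \ref{supportass}, which keep all risk-set denominators bounded away from zero and make $\widehat q_{x,z}$ well behaved, so that each remainder is $o_p(n^{-1/2})$ uniformly; uniformity over $(x,w)\in\mathcal R_{X,W}$ is automatic since that set is finite (Assumption \ref{Asupport}).

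I expect the main obstacle to be the rigorous treatment of the plug-in ranks: making precise the identity $\widehat F_{\widehat V_T\mid X,W,Z}(v\mid x,w,z)=\widehat F^{\mathrm{KM}}_{Y\mid X,W,Z}(\widehat q_{x,z}(v)\mid x,w,z)$ up to a uniformly negligible error in the presence of ties and of the step-function nature of $\widehat F_{T\mid X,Z}$; proving the super-efficiency bound $\sup_{v\in\mathcal I}\lvert\widehat F_{\widehat V_T}(v)-v\rvert=o_p(n^{-1/2})$; and controlling the composition $\widehat F^{\mathrm{KM}}_{Y\mid X,W,Z}\circ\widehat q_{x,z}$ uniformly in $v$, which requires an empirical-process (Hadamard-differentiability) argument for the Kaplan--Meier map composed with a quantile map, with remainder bounds uniform on $\mathcal I$.
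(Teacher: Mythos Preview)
Your high-level decomposition matches the paper's exactly: the same three-bracket split of $\widehat D$, the same product-rule expansion of $\widehat F_{\widehat V_T\mid X,W}$ over $z$ together with the multinomial expansion of $\widehat p_{Z\mid X,W}$, and the same Bayes/mixture algebra that collapses the $z$-sum into $\omega_i/p_{X,W}(x,w)$.

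The substantive difference is in the technical engine for $\widehat F_{\widehat V_T\mid X,W,Z}(v\mid x,w,z)-F_{V_T\mid X,W,Z}(v\mid x,w,z)$. You exploit the monotone-transformation invariance of the product-limit estimator to write this as a $Y$-scale Kaplan--Meier evaluated at the estimated quantile $\widehat q_{x,z}(v)$, and then linearize via (i) the standard KM i.i.d.\ representation at $q_{x,z}(v)$ and (ii) a Bahadur expansion of $\widehat q_{x,z}$ combined with asymptotic equicontinuity. The paper instead stays on the $V$-scale throughout: it develops Lo--Singh type expansions for the log product-limit estimator based on the plugged-in $\widehat V_i$'s, and handles the first-step error $\widehat V_i-V_i$ analytically through a chain of lemmas (a Donsker-class argument giving $\widehat S_{\widehat V}-\widehat S_V-S_{\widehat V}+S_V=o_p(n^{-1/2})$ and its $(X,W,Z)$-joint analogues, Taylor expansions of $S_{\widehat V}$ around $S_V$, and Lo--Singh remainder bounds for the cross term). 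Your route is arguably cleaner for the conditional piece, since the transformation identity disposes of the plugged-in ranks in one stroke; the paper's route requires more machinery but is self-contained on the $V$-scale and does not have to negotiate the step-function and tie issues you flag.

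One caution on your super-efficiency sketch. The cell-wise identity $\widehat F_{\widehat V_T\mid X,Z}(v\mid x,z)=v+O_p(1/n)$ is correct, but it does not by itself force the \emph{pooled} Kaplan--Meier $\widehat F_{\widehat V_T}(v)$ to be $v+o_p(n^{-1/2})$: the pooled product-limit estimator is not a mixture of the cell-wise ones, and $V_T\indep V_C$ only guarantees consistency of the pooled KM for the identity at the usual $n^{-1/2}$ rate. The super-efficiency is a genuine cancellation phenomenon: the i.i.d.\ terms induced by the first-step estimation of $F_{T\mid X,Z}$ (entering through $S_{\widehat V}-S_V$ and $S_{\widehat V,1}-S_{V,1}$) exactly annihilate the i.i.d.\ terms of the ordinary $V$-scale KM representation, after using $dS_{V,1\mid X,Z}=dS_{V,1}\,S_{V_C\mid X,Z}/S_{V_C}$ and $V_T\indep(X,Z)$. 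The paper carries this computation out explicitly; your transformation identity offers no shortcut here because the pooled KM mixes different $(x,z)$-transformations, so you will still need an analytic cancellation argument of that kind.
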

Note that $F_{V_T\mid X,W}(v\mid x,w)=v$ for all $(v,x,w) \in \mathcal{I}\times \mathcal{R}_{X,W}$ under $H_0$. During the proof of this theorem, we also show in Appendix \ref{secapplemma} that $$\sup_{v \in \mathcal{I}}\left\lvert\widehat{F}_{\widehat{V}_T}(v)-v\right\rvert =o_p(n^{-1/2}),$$ meaning that $\widehat{F}_{\widehat{V}_T}(v)$ converges to $v$ at a rate faster than the usual parametric $n^{-1/2}$-rate. While this was already shown by \cite{feve2018estimation} for uncensored conditional ranks, it is not obvious that the result would still hold in the presence of right censoring. Further, let $\ell^{\infty}(\mathcal{I}\times \mathcal{R}_{X,W})$ be the set of all uniformly bounded real-valued functions equipped with the supremum norm. We are now ready to give the main result regarding the limiting distributions of the proposed test statistics.
\begin{theorem}\label{TH2}
    Under Assumptions \ref{Asupport} - \ref{ascoonvpi} and $H_0$, the empirical process $$\left\{n^{1/2}\left[\widehat{F}_{\widehat{V}_T\mid X,W}(v\mid x,w) - \widehat{F}_{\widehat{V}_T}(v)\right] : (v,x,w) \in \mathcal{I} \times  \mathcal{R}_{X,W}\right\},$$ converges weakly in $\ell^{\infty}(\mathcal{I}\times \mathcal{R}_{X,W})$ to a Gaussian process $\mathbb{G}(v,x,w)$ with mean zero and covariance function $$\text{Cov}\left(\frac{\omega_1(v_1,x_1,w_1)}{p_{X,W}(x_1,w_1)},\frac{\omega_1(v_2,x_2,w_2)}{p_{X,W}(x_2,w_2)}\right),$$ where $\omega_1(\cdot)$ is defined in Theorem \ref{TH1}. Moreover, we have that
    $$
    T_n^{KS} \xrightarrow{d} \sup_{v \in \mathcal{I},(x,w) \in \mathcal{R}_{X,W}}\left\lvert \mathbb{G}(v,x,w) \right\rvert,
    $$
    and
    $$
    T^{CM}_{n} \xrightarrow{d} \sum_{(x,w) \in \mathcal{R}_{X,W}} \pi(x,w)\int_{\mathcal{I} }  \mathbb{G}^2(v,x,w) \diff F_{V_T}(v),
    $$
    where $\pi(x,w)$ is defined by Assumption \ref{ascoonvpi}.
\end{theorem}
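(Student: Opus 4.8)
\textbf{Proof proposal for Theorem \ref{TH2}.}
The plan is to feed the asymptotic linear representation of Theorem \ref{TH1} into a standard empirical-process/continuous-mapping argument. First I would use that under $H_0$ the drift term vanishes, $\Omega(v,x,w)\equiv 0$, so that Theorem \ref{TH1} yields
$$\sqrt n\,\widehat D(v,x,w)=n^{-1/2}\sum_{i=1}^n\frac{\omega_i(v,x,w)}{p_{X,W}(x,w)}+o_p(1),$$
uniformly over $(v,x,w)\in\mathcal I\times\mathcal R_{X,W}$. It therefore suffices to prove weak convergence in $\ell^\infty(\mathcal I\times\mathcal R_{X,W})$ of the leading term $\mathbb G_n(v,x,w):=n^{-1/2}\sum_i \omega_i(v,x,w)/p_{X,W}(x,w)=\sqrt n(\mathbb P_n-\mathbb P)g_{v,x,w}$, with $g_{v,x,w}=\omega_\cdot(v,x,w)/p_{X,W}(x,w)$, and then transfer the limit to $T_n^{KS}$ and $T_n^{CM}$.

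For the weak convergence of $\mathbb G_n$ I would proceed in the usual two steps. The centering $\mathbb P g_{v,x,w}=0$ follows by a direct computation: the Kaplan--Meier--type terms $\zeta_i$ and $\xi_i$ vanish unless $Z_i$ equals their last argument, and taking expectations the integrals against $dS_{V,1\mid\cdots}$ and $dN^1_{i,\cdots}$ cancel, while the indicator term in $\omega_i$ averages out over $Z$; hence $\mathbb E[\omega_i(v,x,w)]=0$. Finite-dimensional convergence is then the multivariate Lindeberg--L\'evy CLT, the $\omega_i$ being i.i.d.\ and uniformly bounded --- here Assumption \ref{supportass} is used to guarantee $\inf_{v\in\mathcal I}S_{V\mid X,W,Z}(v\mid x,w,z)>0$ and $\inf_{v\in\mathcal I}S_{V\mid X,Z}(v\mid x,z)>0$, which bounds $\zeta_i,\xi_i$. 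For asymptotic tightness, since $\mathcal R_{X,W}$ is finite it is enough to show that for each fixed $(x,w)$ the class $\{\omega_\cdot(v,x,w):v\in\mathcal I\}$ is $\mathbb P$-Donsker. Every building block is a uniformly bounded function of uniformly bounded variation in $v$: the monotone maps $v\mapsto S_{V_T\mid X,W,Z}$, $v\mapsto F_{V_T\mid X,W,Z}$, $v\mapsto F_{V_T\mid X,W}$, $v\mapsto S_{V_T}$ have total variation at most $1$, and $v\mapsto\zeta_i(v,x,w,z)$, $v\mapsto\xi_i(v,x,z)$ are differences of monotone integrals with total variation bounded (uniformly in $i$) by $2/\inf_{v\in\mathcal I}S_{V\mid X,W,Z}(v\mid x,w,z)^2$, resp.\ $2/\inf_{v\in\mathcal I}S_{V\mid X,Z}(v\mid x,z)^2$, finite by \ref{supportass}. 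Classes of functions of uniformly bounded variation are Donsker, and finite sums and products of uniformly bounded Donsker classes remain Donsker; hence $\{\omega_\cdot(v,x,w)\}$ is Donsker, and $\mathbb G_n\rightsquigarrow\mathbb G$, a tight mean-zero Gaussian process with the stated covariance. Because the entering distribution functions are continuous under \ref{supportass}, the intrinsic semimetric is continuous in $v$, so $v\mapsto\mathbb G(v,x,w)$ has continuous sample paths a.s.

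The two limit laws then follow by the continuous mapping theorem. For the Kolmogorov--Smirnov statistic, $h\mapsto\sup_{(v,x,w)}\lvert h(v,x,w)\rvert$ is continuous on $\ell^\infty(\mathcal I\times\mathcal R_{X,W})$, so $T_n^{KS}=\sup\lvert\sqrt n\,\widehat D\rvert\xrightarrow{d}\sup\lvert\mathbb G\rvert$. For the Cram\'er--von Mises statistic I would write $T_n^{CM}=\sum_{(x,w)}\widehat\pi(x,w)\int_\mathcal I(\sqrt n\,\widehat D(v,x,w))^2\,d\widehat F_{\widehat V_T\mid X,W}(v\mid x,w)$, note that $\widehat\pi(x,w)\to\pi(x,w)$ by Assumption \ref{ascoonvpi}, and that under $H_0$ (so $V_T\indep(X,W)$ and $F_{V_T\mid X,W}(v\mid x,w)=F_{V_T}(v)=v$) the conditional Kaplan--Meier estimator is uniformly consistent, $\sup_{v\in\mathcal I}\lvert\widehat F_{\widehat V_T\mid X,W}(v\mid x,w)-F_{V_T}(v)\rvert=o_p(1)$ --- a by-product of the representation established while proving Theorem \ref{TH1}. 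Passing to an almost surely convergent (Skorokhod) version of $(\sqrt n\,\widehat D,\widehat F_{\widehat V_T\mid X,W},\widehat\pi)$, one splits $\int(\sqrt n\,\widehat D)^2\,d\widehat F_{\widehat V_T\mid X,W}=\int\mathbb G^2\,d\widehat F_{\widehat V_T\mid X,W}+\int[(\sqrt n\,\widehat D)^2-\mathbb G^2]\,d\widehat F_{\widehat V_T\mid X,W}$; the second integral is $o(1)$ by uniform convergence of the integrands ($\widehat F_{\widehat V_T\mid X,W}$ being a probability measure on $\mathcal I$), and the first converges to $\int\mathbb G^2\,dF_{V_T}$ by the portmanteau theorem, using weak convergence of the measures $d\widehat F_{\widehat V_T\mid X,W}(\cdot\mid x,w)\to dF_{V_T}$ together with a.s.\ continuity of $v\mapsto\mathbb G^2(v,x,w)$. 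Combined with $\widehat\pi\to\pi$ this gives $T_n^{CM}\xrightarrow{d}\sum_{(x,w)}\pi(x,w)\int_\mathcal I\mathbb G^2(v,x,w)\,dF_{V_T}(v)$.

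The main obstacle is the Donsker step: obtaining the uniform-in-$i$ bounded-variation bounds on the Kaplan--Meier influence terms $\zeta_i,\xi_i$, which is exactly where Assumption \ref{supportass} and the truncation $\mathcal I\subseteq[0,1-\gamma]$ are needed to keep the survival functions in the denominators bounded away from zero. A secondary subtlety is the continuous-mapping argument for the Cram\'er--von Mises functional, since there the integrator $\widehat F_{\widehat V_T\mid X,W}$ is random; this is resolved by the joint-convergence/Skorokhod device above, relying on the continuity of the sample paths of the limiting Gaussian process in $v$.
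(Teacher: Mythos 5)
Your proof is correct and follows essentially the same route as the paper: you plug in the i.i.d.\ expansion from Theorem~\ref{TH1} (with $\Omega\equiv 0$ under $H_0$), establish Donsker-ness of the class of influence functions $\{\omega_\cdot(v,x,w)/p_{X,W}(x,w)\}$ by exploiting the finiteness of $\mathcal{R}_{X,W,Z}$ together with monotonicity/bounded variation in $v$ (the paper invokes Theorem~2.7.5 of \cite{van1996weak} and Lemma~9.25 of \cite{kosorok2008introduction} for the same conclusion), and then transfer the limit to $T_n^{KS}$ and $T_n^{CM}$ via the continuous mapping theorem. The only cosmetic difference is that the paper delegates the Cram\'er--von Mises step to Corollary~3.1 of \cite{feve2018estimation} while you spell out the Skorokhod-device argument for the random integrator; the underlying idea is the same.
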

An explicit expression for the covariance function is not provided, as it would be excessively long and almost infeasible to estimate in practice. Instead, the limiting distributions are approximated using two possible bootstrap approaches that are detailed in Section \ref{secboot}. Further, consider the local alternative $$H_a :F_{V_T\mid X,W}(v\mid x,w)=v+n^{-1/2}H(v,x,w) \text{ for all } (v,x,w) \in \mathcal{I} \times \mathcal{R}_{X,W},$$ where for all $(x,w) \in \mathcal{R}_{X,W}$ the function $H$ is such that $F_{V_T\mid X,W}$ remains a valid conditional distribution function under $H_a$. It follows that, under $H_a$, the test statistics converge in distribution to the same limiting distribution as under $H_0$, except for the additive bias $H(v,x,w)$, that is
$$
    T_n^{KS} \xrightarrow{\textit{d}} \sup_{v \in \mathcal{I},(x,w) \in \mathcal{R}_{X,W}}\left\lvert \mathbb{G}(v,x,w) +H(v,x,w) \right\rvert,
    $$
    and
    $$
    T^{CM}_{n} \xrightarrow{\textit{d}} \sum_{(x,w) \in \mathcal{R}_{X,W}} \pi(x,w) \int_{\mathcal{I} }  \left[\mathbb{G}(v,x,w)+H(v,x,w)\right]^2 \diff F_{V_T}(v).
    $$
Therefore, the proposed tests can detect local alternatives that converge to $H_0$ at the parametric $n^{-1/2}$-rate.
\subsection{Bootstrap approximations}\label{secboot}

Due to the complicated covariance structure of the process $\mathbb{G}(v,x,w)$, we use bootstrap approximations for the limiting distributions of $T_n^{KS}$ and $T_n^{CM}$. For the bootstrap procedure to be valid, we follow the approach of \cite{feve2018estimation} and impose the slightly stronger null hypothesis $$H^*_0 : V_T \indep (X,W,Z),$$ instead of $H_0:V_T \indep (X,W)$. Note that $H^*_0$ is only slightly stronger than $H_0$ since $V_T \indep (X,Z)$ by construction. We propose two bootstrap procedures that only differ in the construction of the bootstrap censoring times. Procedure A is the standard bootstrap for right censored data \citep{efron1981censored}, in which both the duration and censoring times are generated from (conditional) Kaplan-Meier fits. Procedure B is based on the conditional random censoring algorithm \citep{robinson1983bootstrap}. In this approach, the known censoring times from the original sample are kept for the censored observations. For the uncensored observations, censoring times are generated from a conditional Kaplan-Meier fit, conditionally on the information that the unknown censoring time is greater than the observed duration time. In Section \ref{secfinitesample}, we investigate whether there are scenarios under which one of the bootstrapping approaches performs better. The two procedures (Type A and B) are implemented as follows.

\bigskip \noindent For each $b \in \{1,\dots,B\}:$
\begin{enumerate}
    \item[1.] For $i = 1,\dots,n$, let $X^*_i=X_i$, $W^*_i=W_i$ and $Z^*_i=Z_i$.
    \item[2.]  Let $U^*_{T,1},\dots,U^*_{T,n}$ be i.i.d. variables randomly drawn from $\mathcal{U}[0,1]$ and $\widehat{F}^{-1}_{C \mid X,Z}(u \mid X_i,Z_i) = \inf\{C_i:\widehat{F}_{C \mid X,Z}(C_i \mid X_i,Z_i) \geq u \}$, where $\widehat{F}_{C \mid X,Z}$ is calculated with a conditional Kaplan-Meier estimator that uses $(1-\Delta)$ as the censoring indicator.
       \begin{enumerate}
       \item[A.]  Let $U^*_{C,1},\dots,U^*_{C,n}$ be i.i.d. variables randomly drawn from $\mathcal{U}[0,1]$. For $i = 1,\dots,n$, let $C^*_i = \widehat{F}^{-1}_{C \mid X,Z}(U^*_{C,i} \mid X_i,Z_i)$.
    \item[B.]  Let $U^*_{C,1},\dots,U^*_{C,n}$ be i.i.d. variables randomly drawn from $\mathcal{U}[\widehat{F}_{C \mid X,Z}(Y_i \mid X_i,Z_i),1]$. For $i = 1,\dots,n$, let $C^*_i = \Delta_i \times \widehat{F}^{-1}_{C \mid X,Z}(U^*_{C,i} \mid X_i,Z_i) + (1-\Delta_i) \times C_i$.
    \end{enumerate}
    \item[3.] Next, let $Y^*_i = \min\left\{\widehat{F}^{-1}_{T \mid X,Z}(U^*_{T,i}\mid X_i,Z_i),C^*_i\right\}$ and $\Delta^*_i=\mathbbm{1}\left\{Y^*_i = \widehat{F}^{-1}_{T \mid X,Z}(U^*_{T,i}\mid X_i,Z_i)\right\}$.
    \item[4.] Using  $\left\{Y^*_i,\Delta^*_i,X_i,W_i,Z_i\right\}_{i=1}^n$ calculate the test statistic $T^{*}_{n,b}$.
\end{enumerate}
Finally, compute the $p-$value as $B^{-1}\sum^B_{b=1} \mathbbm{1}\left\{T^*_{n,b} > T_{n}\right\}$.

\section{Finite sample study}\label{secfinitesample}

In this section, we investigate the finite-sample performance of the proposed tests described in Section \ref{sectests} and both of the bootstrap approximations described in Section \ref{secboot} through Monte Carlo simulations. To estimate the Cramér-von Mises test statistic, we set $\widehat{\pi}(x,w) = 1$ for all simulation settings. Moreover, we implement the warp-speed method of \cite{warpspeed} to obtain the critical values. The procedure calculates a single bootstrap test statistic for each Monte Carlo sample and aggregates these statistics across simulations to approximate the critical value, which significantly reduces computation time. We will begin by describing the data-generating process, followed by a discussion of the proposed tests' performance under different degrees of endogeneity, censoring and instrument strength.

\subsection{Data-generating process}\label{secDGP}

For $i=1,\dots,n$, let $$U_{T,i} \sim \mathcal{U}[0,1], \text{ } U_{C,i} \sim \mathcal{U}[0,1]\text{ and } X_i\sim \text{Bin}(0.45), $$ 
such that $$\pi_{W,i}=\frac{\exp(0.9 - 0.3 X_i)}{1+\exp(0.9 - 0.3 X_i)},$$  and $W_i \sim \text{Bin}(\pi_{W,i})$. Moreover, let $$\pi_{Z,i}=\frac{\exp(-2+0.2X_i+\eta W_i+\alpha \bar{U}_{T,i})}{1+\exp(-2+0.2X_i+\eta W_i+\alpha \bar{U}_{T,i})},$$ with $\bar{U}_{T,i} = U_{T,i}-0.5$ and $Z_i \sim \text{Bin}(\pi_{Z,i})$. The reason for centering $U_{T,i}$ is to keep the censoring rate somewhat stable when we vary $\alpha$, which determines the endogeneity of $Z$. This will allow us to compare the power of the test statistics under different degrees of endogeneity. Moreover, $Z$ is exogenous when $\alpha=0$ and $\eta$ controls the instrument strength. 
Further, let 
$$
T_i = \exp\left\{4-0.5X_i-Z_i +\Phi^{-1}(U_{T,i})\right\},
$$
with $\Phi^{-1}$ the inverse of the standard normal distribution. Furthermore, let $$C_i=-\frac{\log(1-U_{C,i})}{\exp\left\{\lambda+0.9X_i+0.8Z_i\right\}}.$$ Note that $\lambda$ controls the censoring rate. Lastly, we generate
$Y_i=\min(T_i,C_i)$ and $\Delta_i=\mathbbm{1}(Y_i=T_i)$ to get the simulated data $\left\{Y_i,\Delta_i,X_i,W_i,Z_i\right\}_{i=1,\dots,n}$. Even though we only have one covariate $X$, the parameter values were chosen such that there are very few data points for which $W_i=0$ and $Z_i=1$. On average, 1.9\% of the simulated data have $(X_i,W_i,Z_i)=(0,0,1)$ and 2.3\% have $(X_i,W_i,Z_i)=(1,0,1)$. This is a similar situation to the empirical application described in Section \ref{secempapp}. For each of the following simulation settings, we performed 1000 Monte Carlo replications.

\subsection{Endogeneity and sample size}

To generate the data used in this part of the simulation study, we let $(\eta,\lambda)=(2.4,-5.7)$. By choosing these parameter values, we have an instrument strength of around $\tau_{WZ}=0.45$ (measured by Kendall's Tau between $W$ and $Z$) and approximately 25\% censoring. We use Kendall's tau as a proxy for instrument strength because it is simple to interpret and report. Note that these values were chosen such that the instrument strength and censoring rate are comparable to the empirical strata examined in Section \ref{secempapp}.

Looking at Figure \ref{distributiontests}, we see that both the Kolmogorov-Smirnov (KS) and Cramér-von Mises (CM) test statistics are able to discriminate the null from the alternative hypotheses. It is also clear that the CM test statistic performs better, where for a sample size of $n=1000$, the distributions for $\alpha = 0$ and $\alpha =5$ are almost completely separated. Figure \ref{plotpower} shows the power of the two test statistics as a function of the degree of endogeneity $\tau_{ZU}$ (measured by Kendall's Tau between $Z$ and $U_T$) for different sample sizes and both of the proposed bootstrap approximations. Again, we see that the CM test statistic outperforms the KS test statistic. As the endogeneity and/or the sample size increase, so does the power of the test statistics. Under the null $(\tau_{ZU} = 0)$, both test statistics reject at the nominal level of 5\%, independent of the sample size. Comparing the two bootstrap approximations, there seems to be no noticeable difference. Looking at Figure \ref{pvalue500}, we see that the $p$-values obtained from Monte Carlo are well approximated by both bootstrapping procedures. However, approach B seems to be slightly more conservative. Overall, we find that our test statistics have good power, even when there are almost no observations with $W=0$ and $Z=1$. Moreover, the CM test statistic consistently outperforms the KS test statistic regardless of sample size or the degree of endogeneity.
\begin{figure}[h!]
\centering
    \includegraphics[width=\linewidth]{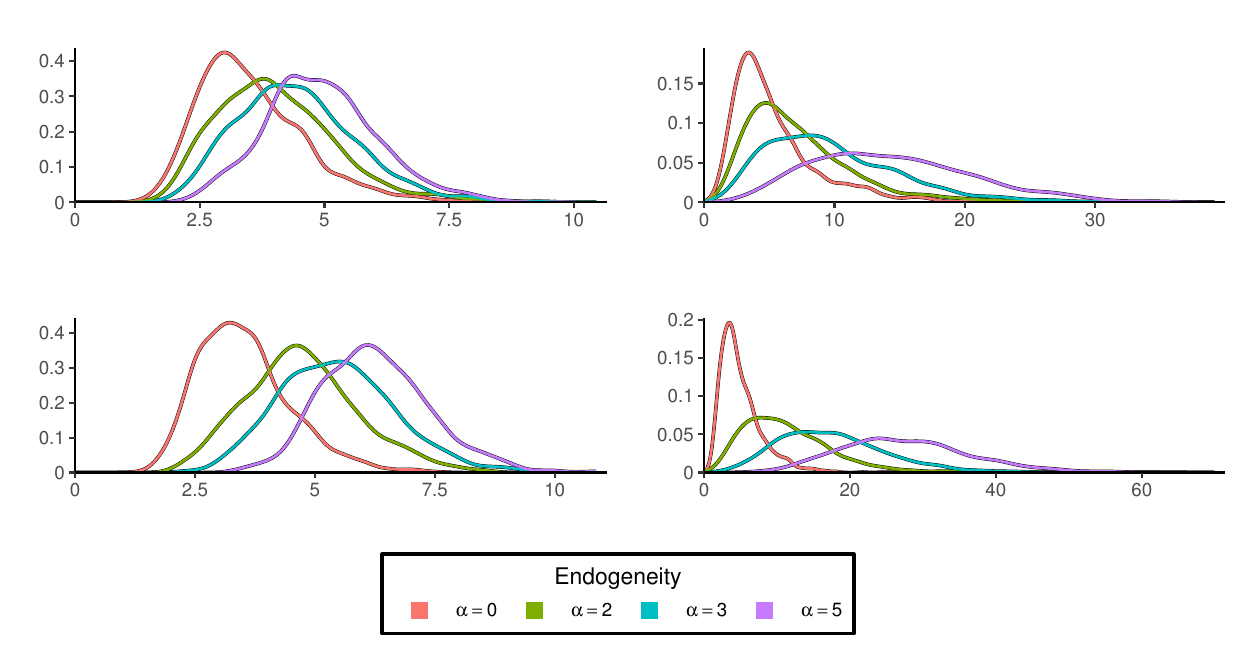}
    \caption{The distribution of $T^{KS}_n$ (left) and $T^{CM}_n$ (right) under the null and various alternative hypotheses (increasing degree of endogeneity $\alpha$). The first row shows the results for $n=500$ and the second row for $n=1000$.}
    \label{distributiontests}
\end{figure}
     \begin{figure}[h!]
\centering
    \includegraphics[width=\linewidth]{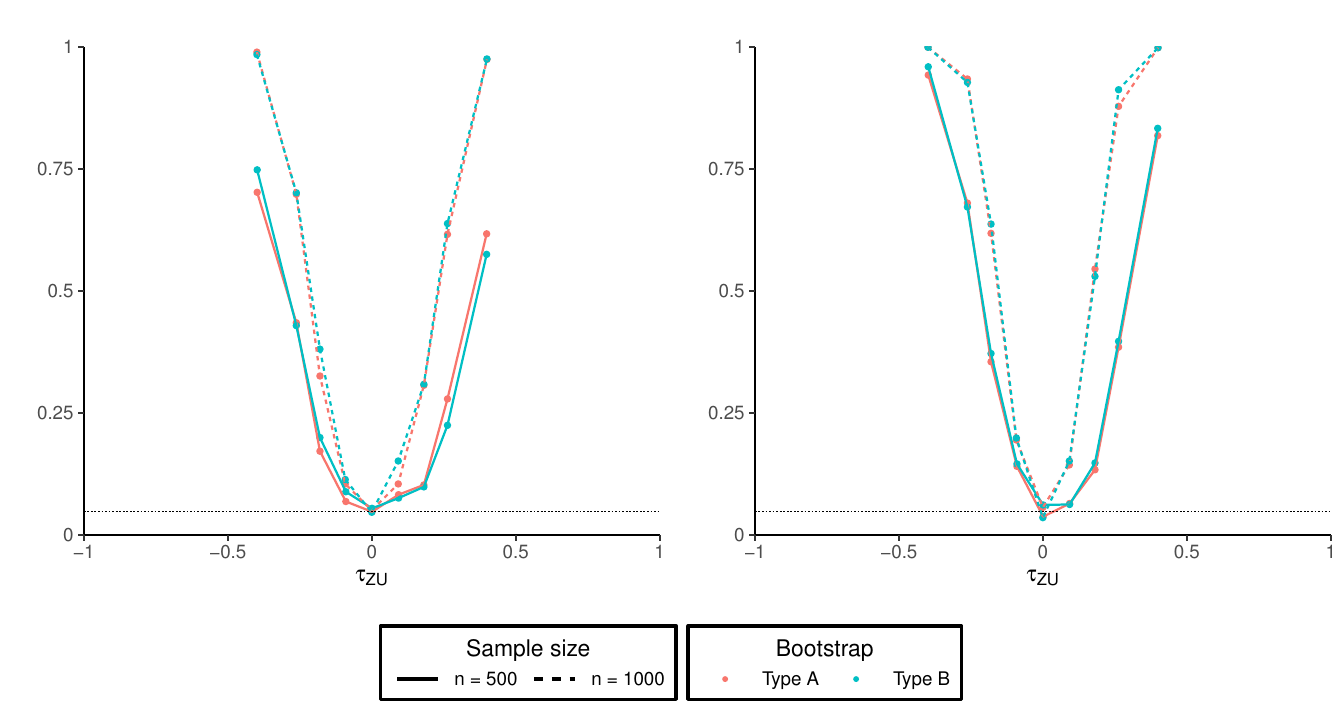}
    \captionof{figure}{The power of $T^{KS}_n$ (left) and $T^{CM}_n$ (right) with respect to the degree of endogeneity $\tau_{ZU}$ for $n=500$, $n=1000$ and both bootstrap approximations (Type A and B). }
    \label{plotpower}
    \end{figure}
        \begin{figure}[h!]
\centering
    \includegraphics[width=\linewidth]{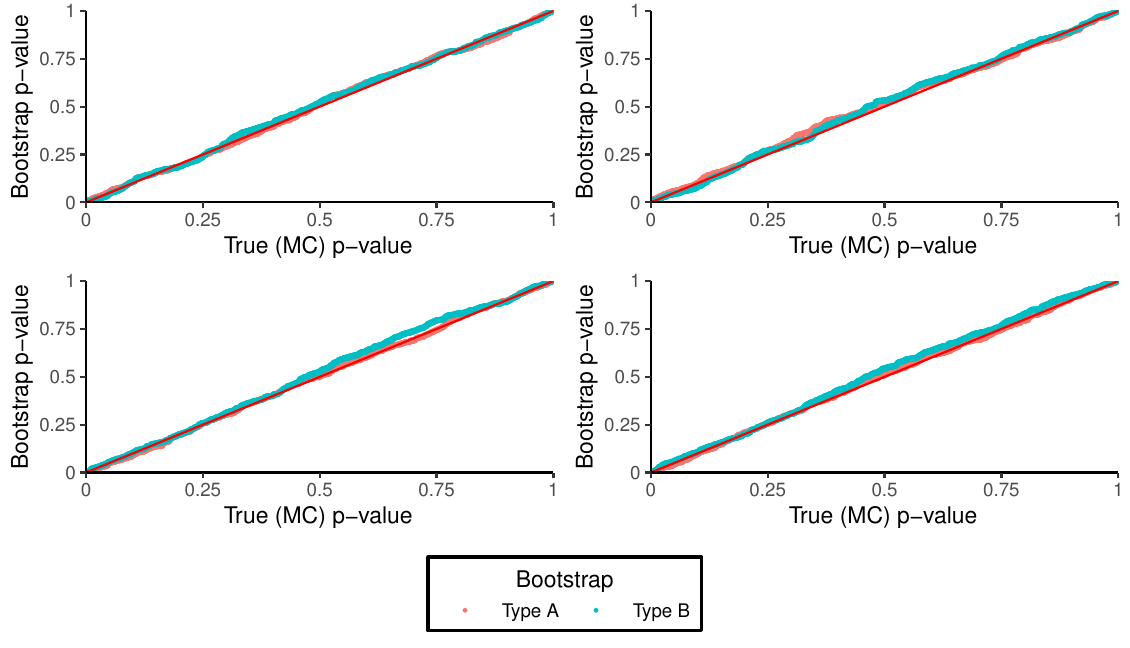}
    \captionof{figure}{The Monte Carlo and bootstrap $p$-values for $T^{KS}_n$ (left) and $T^{CM}_n$ (right) under the null for both bootstrap approximations (Type A and B). The first row shows the results for $n=500$ and the second row for $n=1000$. }
    \label{pvalue500}
    \end{figure}
\subsection{Instrument strength and censoring rate}
In this subsection, we will separately vary $\eta$ and $\lambda$ alongside $\alpha$ to assess the influence of the instrument strength and censoring rate, respectively, on the power of our proposed test statistics. All simulations use a sample size of $n=1000$ and the other parameters have the same value as described in Section \ref{secDGP}. We start by varying the instrument strength $\eta \in \{0.4,0.9,1.4,1.9,2.4,2.9,3.4\}$ and fixing $\lambda =-5.7$, such that there is around 25\% censoring. Figure \ref{plotIV} shows us that the power increases when the instrument strength increases, as would be expected. Moreover, when there is weak endogeneity ($\alpha =2$), the increase in power becomes less as the instrument strength increases. The CM test statistic again consistently outperforms the KS test statistic. Concerning the bootstrap approximations, it can be seen that approach B performs slightly worse than approach A for the KM test statistic when there is weak endogeneity. Under the null, the rejection rate remains at the nominal 5\% level for both approximations, independent of the instrument strength. 

     \begin{figure}[h!]
\centering
    \includegraphics[width=\linewidth]{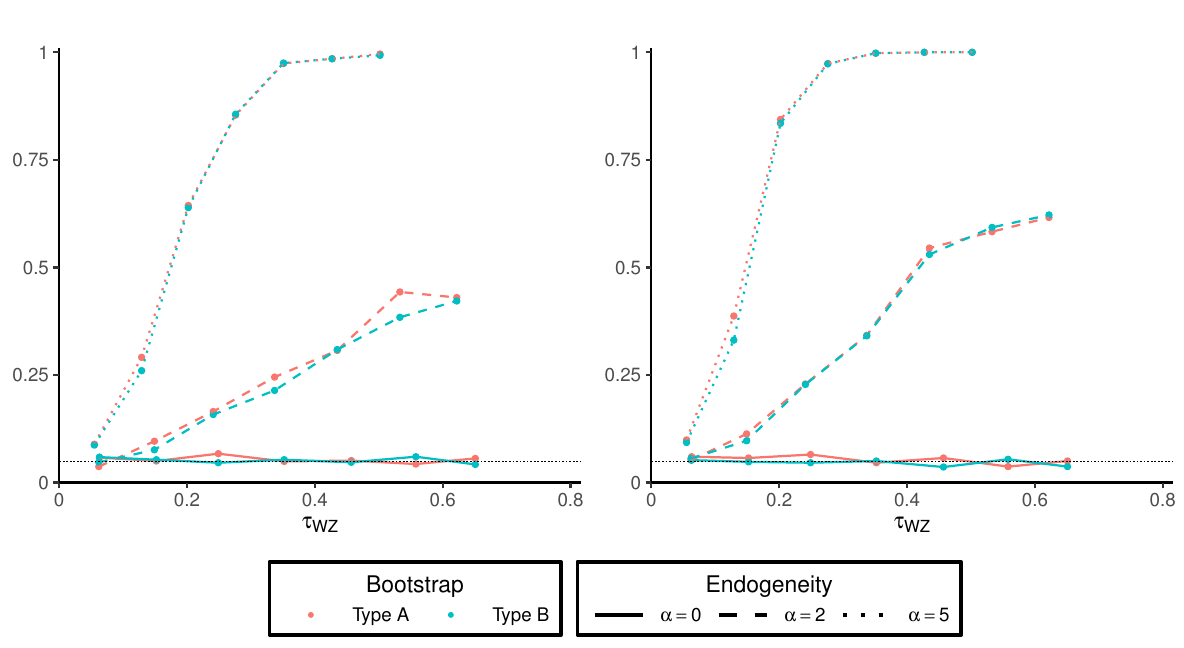}
    \captionof{figure}{The power of $T^{KS}_{1000}$ (left) and $T^{CM}_{1000}$ (right) with respect to the instrument strength $\tau_{WZ}$ for both bootstrap approximations (Type A and B) and increasing degree of endogeneity $(\alpha)$. }
    \label{plotIV}
    \end{figure}

\clearpage To examine the impact of censoring, we vary the parameter $\lambda \in \{-5.7,-5.1,-4.6,-4.2,-3.8\}$ and fix $\eta=2.4$ such that there is an instrument strength of around $\tau_{WZ}=0.45$. As we would expect, Figure \ref{plotcens} indicates that the power of both test statistics decreases when there is more censoring. However, until approximately 50\% censoring, the power of the CM test statistic remains relatively stable. When there is weak endogeneity, we see that bootstrap approach B performs slightly better when the censoring rate is above 50\%. Interestingly, the CM test statistic performs a little worse than the KM test statistic when there is weak endogeneity and high censoring. The rejection rate under the null remains somewhat stable as the censoring rate increases, but falls slightly below 5\% when the censoring rate is around 75\%. Overall, the test statistics still perform well despite reduced instrument strength and high censoring rates.

     \begin{figure}[ht]
\centering
    \includegraphics[width=\linewidth]{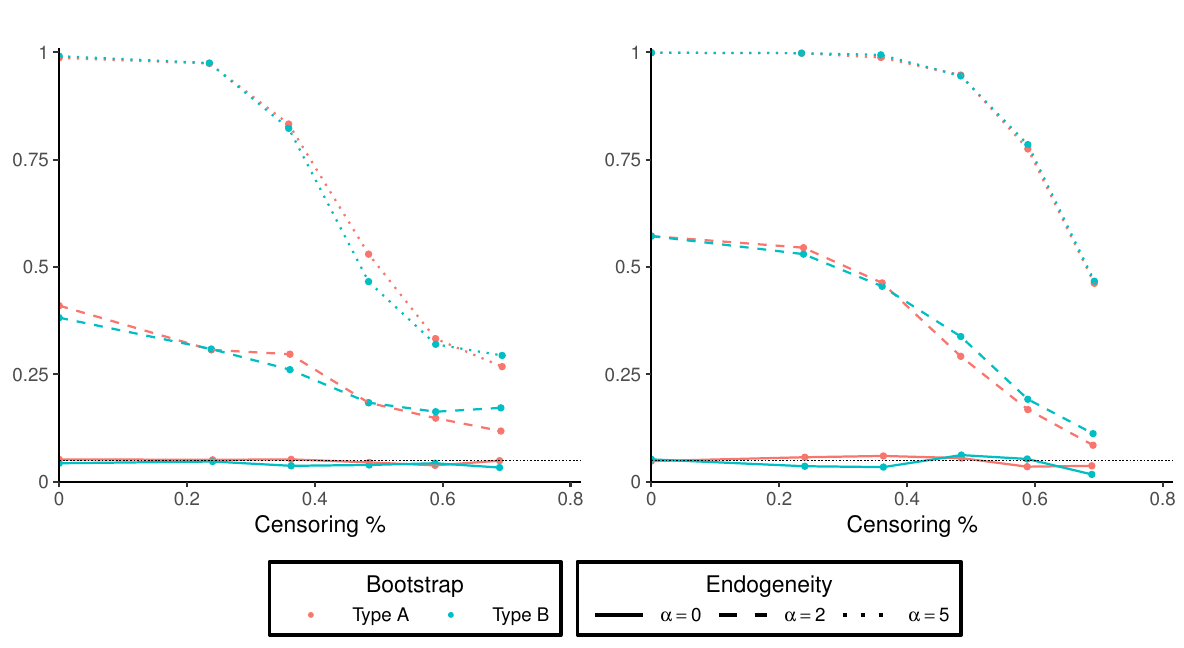}
    \captionof{figure}{The power of $T^{KS}_{1000}$ (left) and $T^{CM}_{1000}$ (right) with respect to the censoring rate for both bootstrap approximations (Type A and B) and increasing degree of endogeneity $(\alpha)$. }
    \label{plotcens}
    \end{figure}

\section{Empirical application}\label{secempapp}

The data come from the National Job Training Partnership Act (JTPA) Study, a large randomized evaluation of more than 600 federally funded programs designed to increase the employability of eligible adults and out-of-school youths. The study was conducted between 1987 and 1989, enrolled over 20000 applicants and collected follow-up information on employment, earnings, and program participation. Random assignment placed unemployed individuals into either a treatment group (eligible for JTPA services) or a control group (ineligible for 18 months). Nevertheless, roughly 3\% of control-group members received JTPA services despite their ineligibility. Because individuals may self-select into treatment in a nonrandom way, noncompliance can induce endogeneity of the treatment \citep{angrist1996identification}. All participants were surveyed between 12 and 36 months after randomization (average of 21 months). A second follow-up survey was administered to a subsample of 5,468 respondents, focusing on the interval between the two interviews, and took place between 23 and 48 months after the initial randomization. The data can be downloaded at \url{https://www.upjohn.org/data-tools/employment-research-data-center/national-jtpa-study}.

Because the study combines a clear policy intervention with extensive follow-up on earnings and labor-market outcomes, the data have become a standard for evaluating causal estimands in both cross-sectional and duration contexts. In particular, \cite{bloom1997benefits}, \cite{abadie2002instrumental} and \cite{wuthrich2020comparison} investigate the impact of JTPA services on the sum of earnings after treatment. The effect of JTPA services on unemployment duration has been investigated by \cite{frandsen2015treatment} and \cite{beyhum2024instrumental} under the (conditional) independent censoring assumption, while \cite{crommen2024instrumental} and \cite{crommen2025estimation} allow for dependent censoring. To make the independent censoring assumption more plausible, \cite{frandsen2015treatment} and \cite{beyhum2024instrumental} rely exclusively on data from the first follow-up survey. Because almost all participants were surveyed, it is reasonable to assume that the censoring is administrative. By contrast, \cite{crommen2024instrumental} and \cite{crommen2025estimation} incorporate data from the second follow-up survey. They argue that participants who were invited to a second survey but did not participate could introduce endogenous censoring. Because the validity of our test statistics relies on the conditional independent censoring assumption, we restrict our empirical analysis to data from the first follow-up interview. Our goal is to test whether instrumental variable methods are required to identify the causal effect of JTPA services on unemployment duration, i.e., if JTPA services are an endogenous treatment.
    
Due to the size of the data set, we will focus our attention on two strata. The first stratum consists of 1127 single white men without children who are 30 years or younger and reported having no job at the time of treatment assignment. The other stratum is similar to the first one, except that the 1017 men are non-white. We will refer to these strata as white and non-white men, respectively. The covariate \textit{HSGED} equals 1 if an individual held a high school diploma or GED at the time of treatment assignment and 0 otherwise. Approximately 46\% of white men and 40\% of non-white men had a high school diploma or obtained a GED at the time of treatment assignment. Moreover, treatment assignment will be our instrument $W$ ($W=1$ if assigned to the treatment group, $0$ for the control group). We deem treatment assignment to be a valid instrument as it is randomly assigned, correlated with JTPA participation (Kendall's Tau of 0.457 for white men and 0.475 for non-white men) and influences time to employment only through treatment participation. Note that $Z=1$ if they participated in JTPA services, and $0$ otherwise. Around 68\% of white men and 71\% of non-white men were assigned to the treatment group. In both strata, only 47\% actually participated in JTPA services. 
    
Figure \ref{plotcenshist} displays histograms of the observed follow-up times for each stratum, where a darker shading indicates a higher censoring rate. Consistent with \cite{frandsen2015treatment}, we find that most observations are censored at approximately 600 days, around which time most of the follow-up interviews took place. Beyond 600 days, nearly all observations are censored. Moreover, Table \ref{tablesum} shows the censoring rates for each level of \textit{HSGED}, $W$ and $Z$. keeping \textit{HSGED} and $Z$ fixed, we find that the censoring rate remains stable for different levels of $W$. Therefore, we do not find any major violations of Assumption \ref{AindepTC}. Table \ref{tablesum} also shows the number of observations for each level of \textit{HSGED}, $W$ and $Z$. As expected, there are very few observations from the control group who participated in JTPA services, i.e., $W=0$ and $Z=1$. However, as shown in Section \ref{secfinitesample}, the test still performs well when there are only a few observations with $W=0$ and $Z=1$. 

\begin{figure}[ht]
\centering
    \includegraphics[width=0.9\linewidth]{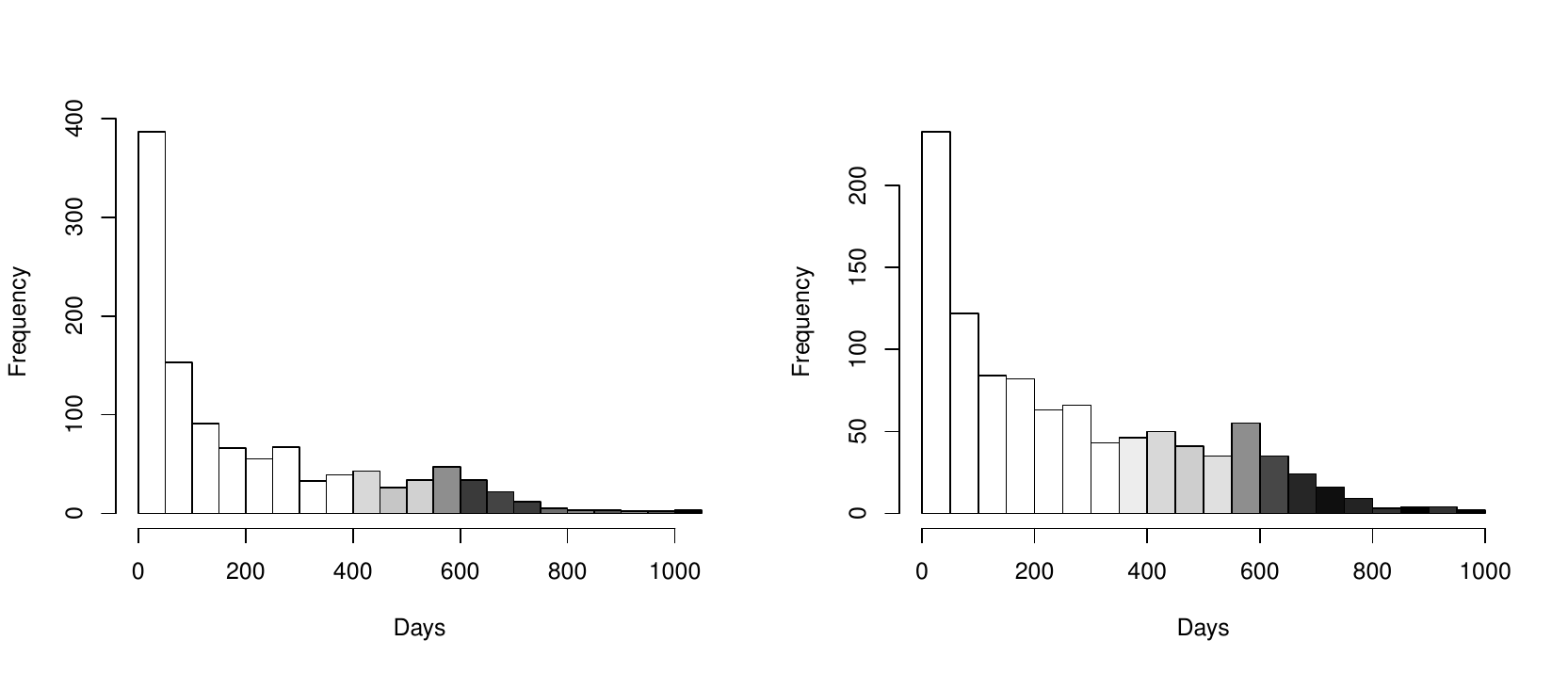}
    \caption{Histograms of the observed follow-up times $Y$ for white men (left) and non-white men (right), in days, starting from random assignment. The darker the shade, the higher the censoring rate. }
    \label{plotcenshist}
    \end{figure}

\begin{table}[ht]
  \centering
  \setlength{\tabcolsep}{7pt}
  \renewcommand{\arraystretch}{1.3}

  \begin{subtable}{0.48\textwidth}
    \centering
    \begin{tabular}{|c|c|c|}
      \hline
      $W=0$ & $Z=0$ & $Z=1$ \\ \hline
      $\mathrm{HSGED}=0$ & $\displaystyle\begin{array}{c}(171,89.5\%)\end{array}$ & $\displaystyle\begin{array}{c}(32,90.6\%)\end{array}$ \\ \hline
      $\mathrm{HSGED}=1$ & $\displaystyle\begin{array}{c}(141,90.8\%)\end{array}$ & $\displaystyle\begin{array}{c}(22,90.9\%)\end{array}$ \\ \hline
    \end{tabular}
  \end{subtable}\hfill
  \begin{subtable}{0.48\textwidth}
    \centering
    \begin{tabular}{|c|c|c|}
 \hline
      $W=1$ & $Z=0$ & $Z=1$ \\ \hline
      $\mathrm{HSGED}=0$ & $\displaystyle\begin{array}{c}(152,87.5\%)\end{array}$ & $\displaystyle\begin{array}{c}(255,90.6\%)\end{array}$ \\ \hline
      $\mathrm{HSGED}=1$ & $\displaystyle\begin{array}{c}(126,87.3\%)\end{array}$ & $\displaystyle\begin{array}{c}(228,89.9\%)\end{array}$ \\ \hline
    \end{tabular}
  \end{subtable}
  \par\vspace{12pt}
  \begin{subtable}{0.48\textwidth}
    \centering
    \begin{tabular}{|c|c|c|}
          \hline
      $W=0$ & $Z=0$ & $Z=1$ \\ \hline
      $\mathrm{HSGED}=0$ & $\displaystyle\begin{array}{c}(154,79.2\%)\end{array}$ & $\displaystyle\begin{array}{c}(12,91.7\%)\end{array}$ \\ \hline
      $\mathrm{HSGED}=1$ & $\displaystyle\begin{array}{c}(112,86.6\%)\end{array}$ & $\displaystyle\begin{array}{c}(19,84.2\%)\end{array}$ \\ \hline
    \end{tabular}
  \end{subtable}\hfill
  \begin{subtable}{0.48\textwidth}
    \centering
    \begin{tabular}{|c|c|c|}
      \hline
      $W=1$ & $Z=0$ & $Z=1$ \\ \hline
      $\mathrm{HSGED}=0$ & $\displaystyle\begin{array}{c}(173,79.2\%)\end{array}$ & $\displaystyle\begin{array}{c}(272,90.1\%)\end{array}$ \\ \hline
      $\mathrm{HSGED}=1$ & $\displaystyle\begin{array}{c}(96,87.5\%)\end{array}$ & $\displaystyle\begin{array}{c}(179,87.7\%)\end{array}$ \\ \hline
    \end{tabular}
  \end{subtable}
  \caption{Number of observations and censoring rate, respectively, for each level of high school diploma/GED status (\textit{HSGED}), JTPA participation $(Z)$ and treatment assignment ($W$) for white men (top row) and non-white men (bottom row).}
  \label{tablesum}
\end{table}

We compute the $p$-values of both test statistics using the two bootstrap approximations described in Section \ref{secboot}, based on 1000 resamples. Similarly to Section \ref{secfinitesample}, we set $\widehat{\pi}(x,w)=1$ for the Cramér–von Mises test statistic. The results are reported in Table \ref{tablepvalue} and Figure \ref{KMstrata} displays the Kaplan–Meier curves, conditional on high school diploma/GED status and treatment participation, for both the white and non-white men. For the stratum of white men, we fail to reject the null hypothesis that $Z$ is exogenous. Therefore, we can simply compare the estimated Kaplan-Meier curves for the treated and untreated. A log-rank test indicates no statistically significant difference between the survival curves ($p$-value of 0.330). Even after conditioning on having a high school diploma or GED, no significant difference is found ($p$-value of 0.503 for both levels of \textit{HSGED}). We therefore conclude that, within this stratum, there is no evidence of a significant effect of JTPA services on unemployment duration. 

\begin{table}[ht]
  \centering
  \setlength{\tabcolsep}{7pt}
  \renewcommand{\arraystretch}{1.3}

  \begin{subtable}{0.48\textwidth}
    \centering
    \begin{tabular}{|c|c|c|}
      \hline
      white men &  Type A & Type B \\ \hline
      $T^{KS}_{1127}$ & $\displaystyle\begin{array}{c}0.595\end{array}$ & $\displaystyle\begin{array}{c}0.572\end{array}$ \\ \hline
      $T^{CM}_{1127}$ & $\displaystyle\begin{array}{c}0.399\end{array}$ & $\displaystyle\begin{array}{c}0.392\end{array}$ \\ \hline
    \end{tabular}
  \end{subtable}\hfill
  \begin{subtable}{0.48\textwidth}
    \centering
    \begin{tabular}{|c|c|c|}
      \hline
         non-white men &  Type A & Type B \\ \hline
      $T^{KS}_{1017}$ & $\displaystyle\begin{array}{c}0.003\end{array}$ & $\displaystyle\begin{array}{c}0.002\end{array}$ \\ \hline
      $T^{CM}_{1017}$ & $\displaystyle\begin{array}{c}0.030\end{array}$ & $\displaystyle\begin{array}{c}0.017\end{array}$ \\ \hline
    \end{tabular}
  \end{subtable}
    \caption{Bootstrap $p$-values of the Kolmogorov–Smirnov (KS) and Cramér–von Mises (CM) test statistics for both bootstrap approximations (Type A and B), for white men (left) and non-white men (right).}
  \label{tablepvalue}
\end{table}

   \begin{figure}[ht]
\centering
    \includegraphics[width=\linewidth]{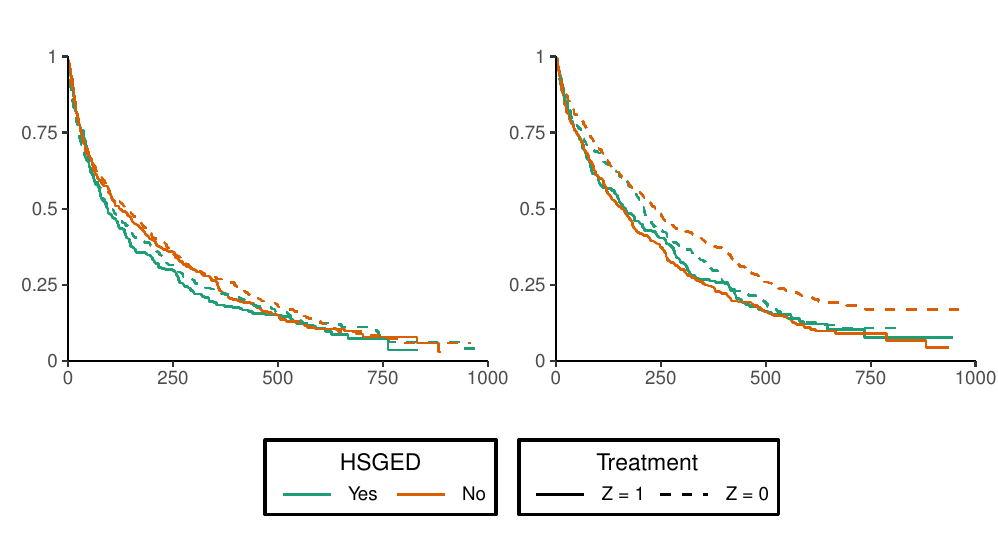}
    \captionof{figure}{Kaplan-Meier curves, conditional on high school diploma/GED status (\textit{HSGED}) and treatment participation ($Z$) for white men (left) and non-white men (right). }
  \label{KMstrata}
\end{figure}

By contrast, for the stratum of non-white men, the null hypothesis that $Z$ is exogenous is rejected. This suggests the presence of unobserved heterogeneity influencing both treatment participation and unemployment duration. A log-rank test further reveals a significant difference between the survival functions for the treated and untreated ($p$-value = 0.0003). We find that, for non-white men without a high school diploma or GED, the treated have a significantly lower probability of being unemployed compared to the untreated ($p$-value of $8.87\times 10^{-5}$). However, for non-white men with a high school diploma or GED, there seems to be no significant difference ($p$-value of 0.403). It is important to emphasize that because $Z$ is endogenous in this stratum, we cannot interpret the observed survival differences as causal effects of JTPA participation. The effect of JTPA services among non-white men without a high school diploma or GED may, for example, reflect self-selection of more motivated or higher-ability individuals into treatment. To actually estimate the causal effect of JTPA participation under endogeneity, one would need to use instrumental variable methods that allow for right censoring. Possible approaches include the nonparametric local average treatment effect framework of \cite{frandsen2015treatment}, as well as the population-level treatment effect methods developed by \cite{beyhum2022nonparametric,beyhum2024instrumental}, which are nonparametric and semiparametric, respectively.

\section{Conclusion and future research}\label{secconclusion}

This paper develops nonparametric tests for exogeneity in nonparametric nonseparable duration models subject to right censoring. The tests are based on the independence between the conditional ranks, which are also right censored, and the instrumental variable. The proposed tests do not require estimation of the underlying structural model using nonparametric instrumental variable methods, which is even more challenging in the presence of right censoring. We derive the asymptotic properties of the proposed tests and illustrate their power under various scenarios using Monte Carlo simulations. The validity of the two bootstrap approaches to approximate the critical values is also analyzed via simulations. We find that the Cramér–von Mises statistic consistently outperforms the Kolmogorov–Smirnov statistic, and that both bootstrapping approaches perform equally well in almost all scenarios. An empirical application to the National Job Training Partnership Act (JTPA) Study illustrates the usefulness of the tests in applied settings. In particular, it provides practitioners with a test to decide if standard survival comparisons are appropriate or whether instrumental variable methods that allow for right censoring are needed to correctly identify treatment effects.

The tests can be extended in multiple directions. Firstly, extending the methodology to accommodate continuous covariates, instruments and treatments would substantially broaden its applicability. The most straightforward way to achieve this is by replacing the conditional Kaplan–Meier estimator with a smooth conditional estimator, such as \cite{beran1981nonparametric}'s estimator. Although conceptually straightforward, this extension introduces both theoretical and practical complications, such as bandwidth selection and the curse of dimensionality. A way of dealing with this could be to instead use semiparametric models, such as the \cite{CoxPHmodel} proportional hazards model. Another important extension is to relax the conditional independent censoring assumption and allow for some forms of dependence between $T$ and $C$ after conditioning on $(X,W,Z)$. Copula-based methods, such as the copula-graphic estimator \citep{Rivest2001AMA,ZHENGMING1995Eoms}, offer a natural approach to modeling such dependencies. However, other copula-based methods for dependent censoring could also be considered (see \cite{crommen2025recent} for a recent review).

\section*{Acknowledgments and funding}
The computational resources and services used in this work were provided by the VSC (Flemish Supercomputer Center), funded by the Research Foundation Flanders (FWO) and the Flemish Government department EWI. G. Crommen is funded by a PhD fellowship from the Research Foundation - Flanders (grant number 11PKA24N). J.P. Florens acknowledges funding from the French National Research Agency (ANR) under the Investments for the Future (Investissement d'Avenir), grant ANR-17-EURE-0010. I. Van Keilegom acknowledges funding from the FWO and F.R.S. - FNRS (Excellence of Science programme, project ASTeRISK, grant no. 40007517), and from the FWO (senior research projects fundamental research, grant no. G047524N). Moreover, the authors would like to thank Jad Beyhum for helpful discussions.

\bibliography{literature}

\clearpage 

\appendix

\appendixpage

Throughout the Appendix, let 

	\begin{alignat*}{2}
		&\widehat{S}_{\widehat{V}}(v)=n^{-1}\sum^n_{i=1}\mathbbm{1}(\widehat{F}_{T \mid X,Z}(Y_i \mid X_i,Z_i) > v),  && \quad S_{\widehat{V}}(v) = \mathbbm{P}(\widehat{F}_{T \mid X,Z}(Y \mid X,Z) > v),  \\ &  \widehat{S}_{V}(v)=n^{-1}\sum^n_{i=1}\mathbbm{1}(F_{T \mid X,Z}(Y_i \mid X_i,Z_i) > v), && \quad  S_V(v)=\mathbbm{P}(F_{T \mid X,Z}(Y \mid X,Z)> v), \\ &  \widehat{S}_{\widehat{V},1}(v)=n^{-1}\sum^n_{i=1}\mathbbm{1}(\widehat{F}_{T \mid X,Z}(Y_i \mid X_i,Z_i) > v,\Delta_i =1), && \quad S_{\widehat{V},1}(v) = \mathbbm{P}(\widehat{F}_{T \mid X,Z}(Y \mid X,Z) > v, \Delta = 1),  \\ & \widehat{S}_{V,1}(v)=n^{-1}\sum^n_{i=1}\mathbbm{1}(F_{T \mid X,Z}(Y_i \mid X_i,Z_i) > v,\Delta_i =1), && \quad S_{V,1}(v)=\mathbbm{P}(F_{T \mid X,Z}(Y \mid X,Z)>v,\Delta=1),
	\end{alignat*}
where $S_{\widehat{V}}(v) = \mathbbm{P}(\widehat{F}_{T \mid X,Z}(Y \mid X,Z) > v)$ is calculated with respect to the law of $(Y,Z,X)$ conditional on $\widehat{F}_{T \mid X,Z}$ with $(Y,X,Z)$ independent of the data $\{Y_i,X_i,Z_i\}^n_{i=1}$ and  $S_{\widehat{V},1}(v) = \mathbbm{P}(\widehat{F}_{T \mid X,Z}(Y \mid X,Z) > v, \Delta = 1)$ is calculated with respect to the law of $(Y,\Delta,X,Z)$ conditional on $\widehat{F}_{T \mid X,Z}$ with $(Y,\Delta,X,Z)$ independent of the data $\{Y_i,\Delta_i,X_i,Z_i\}^n_{i=1}$. Moreover, let
\begin{alignat*}{2}
		&\widehat{S}_{\widehat{V},1\mid X,W,Z}(v \mid x,w,z)=\frac{\widehat{S}_{\widehat{V},X,W,Z,1}(v,x,w,z)}{\widehat{p}_{X,W,Z}(x,w,z)},  && \quad S_{\widehat{V},1\mid X,W,Z}(v \mid x,w,z)=\frac{S_{\widehat{V},X,W,Z,1}(v,x,w,z)}{p_{X,W,Z}(x,w,z)},  \\ &  \widehat{S}_{V,1\mid X,W,Z}(v \mid x,w,z)=\frac{\widehat{S}_{V,X,W,Z,1}(v,x,w,z)}{\widehat{p}_{X,W,Z}(x,w,z)}, && \quad  S_{V,1\mid X,W,Z}(v \mid x,w,z)=\frac{S_{V,X,W,Z,1}(v,x,w,z)}{p_{X,W,Z}(x,w,z)},
	\end{alignat*}
with
\begin{align*}
    & \widehat{S}_{\widehat{V},X,W,Z,1}(v,x,w,z)=n^{-1}\sum^n_{i=1} \mathbbm{1}(\widehat{F}_{T \mid X,Z}(Y_i \mid X_i,Z_i) > v,X_i=x, W_i = w, Z_i = z,\Delta_i=1),\\ & S_{\widehat{V},X,W,Z,1}(v,x,w,z)=\mathbbm{P}(\widehat{F}_{T \mid X,Z}(Y \mid X,Z) > v,X=x, W = w, Z = z,\Delta=1), \\ & \widehat{S}_{V,X,W,Z,1}(v,x,w,z)=n^{-1}\sum^n_{i=1} \mathbbm{1}(F_{T \mid X,Z}(Y_i \mid X_i,Z_i) > v,X_i=x, W_i = w, Z_i = z,\Delta_i=1),  \\ & S_{V,X,W,Z,1}(v,x,w,z)=\mathbbm{P}(F_{T \mid X,Z}(Y \mid X,Z) > v,X=x, W = w, Z = z,\Delta=1), \\ & \widehat{p}_{X,W,Z}(x,w,z)=n^{-1}\sum^n_{i=1}\mathbbm{1}(X_i=x,W_i=w,Z_i=z), \\ &   p_{X,W,Z}(x,w,z)=\mathbbm{P}(X=x,W=w,Z=z),
\end{align*}
where $S_{\widehat{V},X,W,Z,1}(v,x,w,z) = \mathbbm{P}(\widehat{F}_{T \mid X,Z}(Y \mid X,Z) > v, X=x,W=w,Z=z,\Delta=1)$ is calculated with respect to the law of $(Y,\Delta,X,W,Z)$ conditional on $\widehat{F}_{T \mid X,Z}$ with $(Y,\Delta,X,W,Z)$ independent of the data $\{Y_i,\Delta_i,X_i,W_i,Z_i\}^n_{i=1}$. Further, let $\mathcal{R}_{X,W,Z} = \{(x,w,z) \in \mathcal{R}_X \times \mathcal{R}_W \times \mathcal{R}_Z : \mathbbm{P}(X=x,W=w,Z=z) > 0 \}$ and similarly for $\mathcal{R}_{X,Z}$. Note that $F$ always denotes the distribution function, $S=1-F$ the survival function and $f$ the density function. Moreover, we omit definitions when conditioning $V$ or $\widehat{V}$ on, or forming the joint distribution with, any subset of $(X,W,Z,\Delta)$, as these are implied by earlier definitions. Before proving the main results of Section \ref{asymptoticsection}, we provide the following technical lemmas.


\section{Technical lemmas}\label{secapplemma}

\begin{lemma}\label{Lemmadosnkerclass}
Under Assumptions \ref{Asupport} - \ref{supportass}, we have that
\begin{itemize}
    \item[(i)] $\sup_{v \in \mathcal{I}}\left\lvert \widehat{S}_{\widehat{V}}(v) - \widehat{S}_{V}(v) - S_{\widehat{V}}(v) + S_V(v) \right\rvert = o_p(n^{-1/2}), $
    \item[(ii)] $\sup_{v \in \mathcal{I}}\left\lvert \widehat{S}_{\widehat{V},1}(v) - \widehat{S}_{V,1}(v) -S_{\widehat{V},1}(v) + S_{V,1}(v) \right\rvert = o_p(n^{-1/2}),$
    \item[(iii)] $\begin{aligned}
        \sup_{v \in \mathcal{I},(x,w,z) \in\mathcal{R}_{X,W,Z}} \Big\lvert &\widehat{S}_{\widehat{V},X,W,Z}(v,x,w,z)  - \widehat{S}_{V,X,W,Z}(v,x,w,z) \\ &- S_{\widehat{V},X,W,Z}(v,x,w,z) + S_{V,X,W,Z}(v,x,w,z)  \Big\rvert = o_p(n^{-1/2}),    \end{aligned}$
    \item[(iv)] $\begin{aligned}
        \sup_{v \in \mathcal{I},(x,w,z) \in\mathcal{R}_{X,W,Z}} \Big\lvert & \widehat{S}_{\widehat{V},X,W,Z,1}(v,x,w,z)  - \widehat{S}_{V,X,W,Z,1}(v,x,w,z) \\&  - S_{\widehat{V},X,W,Z,1}(v,x,w,z) + S_{V,X,W,Z,1}(v,x,w,z)  \Big\rvert = o_p(n^{-1/2}). \end{aligned}$
\end{itemize}
\end{lemma}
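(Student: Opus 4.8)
The plan is to recast all four displays as a single stochastic equicontinuity statement for an empirical process indexed by a Donsker class, exploiting that $X$ and $Z$ are categorical. Let $\mathbb{G}_n(g)=n^{-1/2}\sum_{i=1}^n\{g(Y_i,\Delta_i,X_i,W_i,Z_i)-\mathbbm{E}\,g\}$, and for a conditional distribution function $G$ and $v\in[0,1]$ put $g_{G,v}(y,x,z)=\mathbbm{1}(G(y\mid x,z)>v)$. Writing $\widehat{F}=\widehat{F}_{T\mid X,Z}$ and $F=F_{T\mid X,Z}$, one has $\widehat{S}_{\widehat{V}}(v)-S_{\widehat{V}}(v)=n^{-1/2}\mathbb{G}_n(g_{\widehat{F},v})$ and $\widehat{S}_V(v)-S_V(v)=n^{-1/2}\mathbb{G}_n(g_{F,v})$, since $S_{\widehat{V}}(v)$ is by definition the expectation of $g_{\widehat{F},v}$ against an independent copy of $(Y,X,Z)$ with $\widehat{F}$ held fixed. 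Hence (i) is equivalent to $\sup_{v\in\mathcal{I}}\lvert\mathbb{G}_n(g_{\widehat{F},v})-\mathbb{G}_n(g_{F,v})\rvert=o_p(1)$, and (ii)--(iv) are of the same form after multiplying $g_{G,v}$ by the fixed indicators $\mathbbm{1}(\Delta=1)$ and/or $\mathbbm{1}(X=x,W=w,Z=z)$; since $\mathcal{R}_{X,W,Z}$ is finite, the supremum over $(x,w,z)$ is a finite maximum and causes no difficulty.

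First I would verify that the indexing functions lie in a fixed Donsker class. Because $X,Z$ are categorical, $g_{G,v}(y,x,z)=\sum_{(x',z')\in\mathcal{R}_{X,Z}}\mathbbm{1}(x=x',z=z')\,\mathbbm{1}(G(y\mid x',z')>v)$, and for each $(x',z')$ the map $y\mapsto\mathbbm{1}(G(y\mid x',z')>v)$ is the indicator of a half-line since $G(\cdot\mid x',z')$ is nondecreasing. Thus $\{g_{G,v}:v\in[0,1],\ G\text{ a conditional d.f.}\}$ --- and the variants carrying the extra fixed indicators --- are finite sums of VC-subgraph classes of uniformly bounded functions, hence $P$-Donsker. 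In particular $g_{\widehat{F},v}$ is, with probability one, a member of this class for every $v$; this is precisely the device that accommodates the in-sample dependence between $\widehat{F}$ and the data without any sample splitting.

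Next I would bound the $L^2(P)$ distance between $g_{\widehat{F},v}$ and $g_{F,v}$. Set $\widehat{\varepsilon}_n=\sup_{(x,z)\in\mathcal{R}_{X,Z}}\sup_{y:\,F(y\mid x,z)\le 1-\gamma}\lvert\widehat{F}(y\mid x,z)-F(y\mid x,z)\rvert$. When $\widehat{\varepsilon}_n$ is small, $g_{\widehat{F},v}$ and $g_{F,v}$ disagree only if $F_{T\mid X,Z}(Y\mid X,Z)$ lies within $\widehat{\varepsilon}_n$ of $v$, so for every $v\in\mathcal{I}$
\[
  \mathbbm{E}\big[(g_{\widehat{F},v}-g_{F,v})^2\big]\le \mathbbm{P}\big(\lvert F_{T\mid X,Z}(Y\mid X,Z)-v\rvert\le\widehat{\varepsilon}_n\big)\le c\,\widehat{\varepsilon}_n,
\]
where the last step uses Assumption \ref{supportass}, which bounds the density of $Y$ relative to that of $T$ on $\{F_{T\mid X,Z}\le 1-\gamma\}$; this is exactly where the cutoff $\gamma$ and the restriction $\mathcal{I}\subseteq[0,1-\gamma]$ are used. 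Since the conditional Kaplan--Meier estimator is uniformly consistent (indeed $O_p(n^{-1/2})$) on $\{F_{T\mid X,Z}\le 1-\gamma\}$ within each of the finitely many cells, $\widehat{\varepsilon}_n=o_p(1)$ and hence $\sup_{v\in\mathcal{I}}\lVert g_{\widehat{F},v}-g_{F,v}\rVert_{L^2(P)}=o_p(1)$; appending the extra fixed indicators only shrinks these probabilities, covering (ii)--(iv).

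Finally, combining the Donsker property with this uniformly vanishing $L^2$-diameter, the asymptotic equicontinuity of $\mathbb{G}_n$ over the class gives $\sup_{v\in\mathcal{I}}\lvert\mathbb{G}_n(g_{\widehat{F},v})-\mathbb{G}_n(g_{F,v})\rvert=o_p(1)$, and multiplying back by $n^{-1/2}$ delivers the claimed $o_p(n^{-1/2})$ bounds. I expect the main obstacle to be the bookkeeping forced by the fact that $\widehat{F}$ is built from the same sample: one must produce a \emph{single} Donsker class containing $g_{\widehat{F},v}$ for all $v$ at once (with probability one), then run the equicontinuity argument for a random element of that class together with the uniform rate $\widehat{\varepsilon}_n=o_p(1)$ --- each of these steps relying on the categorical structure of $(X,Z)$ and on Assumption \ref{supportass} controlling behaviour near $v=1$.
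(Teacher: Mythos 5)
Your proposal takes essentially the same empirical-process route as the paper: cast (i)--(iv) as asymptotic equicontinuity of $\mathbb{G}_n$ over a fixed Donsker class of indicator functions, verify the Donsker property by decomposing along the finitely many values of $(X,Z)$ (and of $(X,W,Z,\Delta)$ for (ii)--(iv)) into VC/bracketing-controlled pieces, and then show the random $L^2$-diameter $\sup_{v}\lVert g_{\widehat F,v}-g_{F,v}\rVert_{L^2(P)}$ vanishes so that equicontinuity applies. The only place you compute differently is that diameter bound: the paper passes to quantile representations $\widehat F^{-1}_{T\mid X,Z},F^{-1}_{T\mid X,Z}$, rewrites the disagreement probability as $\int\lvert F_{Y\mid X,Z}(\widehat F^{-1})-F_{Y\mid X,Z}(F^{-1})\rvert dF_{X,Z}$, and invokes the global Lipschitz bound $\sup_y\lvert\partial_y F_{Y\mid X,Z}\rvert<\infty$ from \ref{supportass} together with Lo--Singh's uniform consistency of the conditional Kaplan--Meier quantile; you instead stay with $\widehat F,F$ themselves and use the density-ratio condition in \ref{supportass}. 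These are close variants of the same calculation, not a genuinely different argument.

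There is, however, a gap in your diameter step near the right endpoint of $\mathcal{I}$. You define $\widehat\varepsilon_n$ as a supremum of $\lvert\widehat F - F\rvert$ over $\{y:F(y\mid x,z)\le 1-\gamma\}$ and then assert that disagreement between $g_{\widehat F,v}$ and $g_{F,v}$ forces $\lvert F_{T\mid X,Z}(Y\mid X,Z)-v\rvert\le\widehat\varepsilon_n$. That inclusion is justified when $F_{T\mid X,Z}(Y\mid X,Z)\le 1-\gamma$, but in the direction $\widehat F(Y\mid X,Z)\le v<F_{T\mid X,Z}(Y\mid X,Z)$ with $F_{T\mid X,Z}(Y\mid X,Z)>1-\gamma$ (which can occur for $v$ near $1-\gamma$), the uniform bound $\widehat\varepsilon_n$ does not apply to that $Y$ and the implication is unproven; nor does the density-ratio bound, stated only on $\{F_{T\mid X,Z}\le 1-\gamma\}$, cover the piece of $[v-\widehat\varepsilon_n,v+\widehat\varepsilon_n]$ that spills past $1-\gamma$. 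This is fixable --- for instance, take the supremum defining $\widehat\varepsilon_n$ over $\{F\le 1-\gamma'\}$ for a slightly smaller $\gamma'<\gamma$ and note Lo--Singh still gives $o_p(1)$ there, or handle the tail event $\{F_{T\mid X,Z}(Y\mid X,Z)>1-\gamma,\ \widehat F(Y\mid X,Z)\le v\}$ separately via monotonicity of $\widehat F(\cdot\mid x,z)$ --- but as written the bound on $\mathbbm{E}\big[(g_{\widehat F,v}-g_{F,v})^2\big]$ is not established uniformly over $v\in\mathcal{I}$. The paper's quantile formulation avoids this because the global Lipschitz bound on $F_{Y\mid X,Z}$ in \ref{supportass} is not restricted to $\{F_{T\mid X,Z}\le1-\gamma\}$.
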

\begin{proof}
We will start by giving the proof of $(i)$. Define the class
\begin{align*}
\mathcal{F} = \bigg\{& (y,x,z) \to \mathbbm{1}\left(F(y \mid x,z) \leq v \right)  : v \in \mathcal{I}, \\& \qquad F(\cdot \mid x,z) \text{ is  monotone onto } [0,1] \text{ for all } (x,z) \in \mathcal{R}_{X,Z}  \bigg\}.
\end{align*}
We will now show that $\mathcal{F}$ is Donsker by proving that
 $$
  \int^\infty_0  \sqrt{\log N_{[\hspace{0.05cm}]}\big(\epsilon, \mathcal{F}, \lVert\cdot\rVert_{L_2}\big)}\diff \epsilon < \infty,
  $$
where $N_{[\hspace{0.05cm}]}\big(\epsilon, \mathcal{F}, \lVert\cdot\rVert_{L_2}\big)$ is the bracketing number and $\lVert g \rVert_{L_j} = \left(\EX[g^j(Y,X,Z)]\right)^{1/j}$ for any $j$ and function $g$ \citep[Theorem 2.5.6]{van1996weak}. Since $X$ and $Z$ only take on a finite number of values by Assumption \ref{Asupport}, we can rewrite the function class
\begin{align*}
\mathcal{F} = \bigg\{&(y,x,z) \to   \sum_{(\Tilde{x},\Tilde{z}) \in \mathcal{R}_{X,Z}} \mathbbm{1}\left(x=\Tilde{x},z=\Tilde{z}\right)\mathbbm{1}\left(F(y \mid \Tilde{x},\Tilde{z}) \leq v \right)  : v \in \mathcal{I}, \\ & \qquad  F(\cdot \mid \Tilde{x},\Tilde{z}) \text{ is  monotone onto } [0,1] \text{ for all } (\Tilde{x},\Tilde{z}) \in \mathcal{R}_{X,Z}  \bigg\},
\end{align*}
which is the sum of the classes
$$
\mathcal{F}_{\Tilde{x},\Tilde{z}}  = \bigg\{(y,x,z) \to \mathbbm{1}\left(x=\Tilde{x},z=\Tilde{z} \right) \mathbbm{1}\left(F(y \mid \Tilde{x},\Tilde{z} ) \leq v \right)  : v \in \mathcal{I}, F(\cdot \mid \Tilde{x},\Tilde{z}) \text{ is  monotone onto } [0,1]  \bigg\},
$$
over all $(\Tilde{x},\Tilde{z}) \in \mathcal{R}_{X,Z}$.
Note that we can rewrite $\mathcal{F}_{\Tilde{x},\Tilde{z}}$ as the product of the classes
$$
\mathcal{G}_{\Tilde{x},\Tilde{z}} = \bigg\{(x,z) \to \mathbbm{1}\left(x=\Tilde{x},z=\Tilde{z} \right)  \bigg\},
$$
and 
$$
\mathcal{H}_{\Tilde{x},\Tilde{z}} = \bigg\{y \to \mathbbm{1}\left(F(y \mid \Tilde{x},\Tilde{z} ) \leq v \right)  : v \in \mathcal{I}, F(\cdot \mid \Tilde{x},\Tilde{z}) \text{ is  monotone onto } [0,1]  \bigg\}.
$$
It is clear that the bracketing number of $\mathcal{G}_{\Tilde{x},\Tilde{z}}$ is 1 and that $\mathcal{H}_{\Tilde{x},\Tilde{z}}$ has VC-dimension 2 for all $(\Tilde{x},\Tilde{z}) \in \mathcal{R}_{X,Z}$ \citep[Example 2.6.1]{van1996weak}, meaning they are both Donsker. Since both of these classes are uniformly bounded by 1, we know from Lemma 9.25 by \cite{kosorok2008introduction} that $N_{[\hspace{0.05cm}]}\big(\epsilon, \mathcal{F}_{\Tilde{x},\Tilde{z}}, \lVert\cdot\rVert_{L_2}\big) = O(\epsilon^{-2})$. Moreover, this also implies that $$N_{[\hspace{0.05cm}]}\big(\epsilon, \mathcal{F}, \lVert\cdot\rVert_{L_2}\big) = O(\epsilon^{-2d_{x,z}}),$$ 
where $d_{x,z}=\lvert\mathcal{R}_{X,Z}\rvert$. From this it follows that $\mathcal{F}$ is Donsker since
$$
  \int^1_0  \sqrt{\log(1/\epsilon)}\diff \epsilon < \infty,
$$
and for $\epsilon>1$, it is clear that 1 bracket suffices. We continue by using Corollary 2.3.12 in \cite{van1996weak}, from which we know that for every $\epsilon >0$:
\begin{align*}
\lim_{\kappa \downarrow 0} \lim_{n \to \infty} \mathbbm{P}\Bigg(\sup_{f,g \in \mathcal{F}, \rho(f-g) < \kappa}\bigg\lvert   n^{-1/2} \sum^n_{i=1} \big\{ &f(Y_i,X_i,Z_i) -g(Y_i,X_i,Z_i) \\&  - \EX[f(Y,X,Z)]  + \EX[g(Y,X,Z)] \big\} \bigg\rvert > \epsilon\Bigg)=0,
\end{align*}
where $\mathcal{F}$ is a Donsker class and $$\rho^2(f-g)=\EX\left[\left\{ f(Y,X,Z)-g(Y,X,Z)-\EX\left[f(Y,X,Z)-g(Y,X,Z)\right]\right\}^2\right].$$
Therefore, we calculate
\begin{align*}
& \rho^2\left(\mathbbm{1}(\widehat{F}_{T\mid X,Z}(Y \mid X,Z) \leq v ) - \mathbbm{1}(F_{T\mid X,Z}(Y \mid X,Z) \leq v)\right) \\ & \leq \EX\left[\left(\mathbbm{1}(\widehat{F}_{T\mid X,Z}(Y \mid X,Z) \leq v ) - \mathbbm{1}(F_{T\mid X,Z}(Y \mid X,Z) \leq v)\right)^2 \right] \\ & = \int \Big[F_{Y\mid X,Z}\left(\widehat{F}_{T \mid X,Z}^{-1}(v \mid x,z) \mid x,z \right)dF_{X,Z}(x,z)  \\ & \quad + \int F_{Y\mid X,Z}\left(F_{T \mid X,Z}^{-1}(v \mid x,z) \mid x,z \right)dF_{X,Z}(x,z) \\ & \quad - 2\int F_{Y\mid X,Z}\left(\widehat{F}_{T \mid X,Z}^{-1}(v \mid x,z) \wedge F_{T \mid X,Z}^{-1}(v \mid x,z) \mid x,z \right) dF_{X,Z}(x,z) \\ & =\int \left\lvert F_{Y\mid X,Z}\left(\widehat{F}_{T \mid X,Z}^{-1}(v \mid x,z) \mid x,z \right)- F_{Y\mid X,Z}\left(F_{T \mid X,Z}^{-1}(v \mid x,z) \mid x,z \right) \right\rvert dF_{X,Z}(x,z) \\ & \leq M \int \left\lvert \widehat{F}_{T \mid X,Z}^{-1}(v \mid x,z)- F_{T \mid X,Z}^{-1}(v \mid x,z) \right\rvert dF_{X,Z}(x,z) \\ & \leq M \max_{(x,z) \in \mathcal{R}_{X,Z}}\sup_{v\in \mathcal{I}} \left\lvert \widehat{F}_{T \mid X,Z}^{-1}(v \mid x,z)- F_{T \mid X,Z}^{-1}(v \mid x,z) \right\rvert,
\end{align*}
with $M$ some constant and where $\wedge$ indicates the minimum. It now follows from Theorem 2 by \cite{lo1986product} that $$\sup_{v\in \mathcal{I}} \left\lvert \widehat{F}_{T \mid X,Z}^{-1}(v \mid x,z)- F_{T \mid X,Z}^{-1}(v \mid x,z) \right\rvert =o(1) \quad a.s.,$$ for all $ (x,z) \in \mathcal{R}_{X,Z}$, which completes the proof. The proofs of $(ii)$, $(iii)$ and $(iv)$ are similar to the proof of $(i)$ and therefore omitted.
\end{proof}

\begin{lemma}\label{lemmabigO}
    Under Assumptions \ref{Asupport} - \ref{supportass}, we have that
    \begin{itemize}
        \item[(i)] $
    \sup_{v \in \mathcal{I}}\left\lvert\widehat{S}_{\widehat{V}}(v)-S_V(v)\right\rvert =O_p(n^{-1/2}),
    $ 
    \item[(ii)]  $
    \sup_{v \in \mathcal{I}, (x,w,z) \in \mathcal{R}_{X,W,Z}}\left\lvert\widehat{S}_{\widehat{V}\mid X,W,Z}(v\mid x,w,z)-S_{V \mid X,W,Z}(v \mid x,w,z)\right\rvert =O_p(n^{-1/2}).
    $
    \end{itemize}

\end{lemma}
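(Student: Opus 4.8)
The plan is to prove both parts from a single three-term decomposition, taking (i) as the prototype. For (i), I would write, uniformly in $v\in\mathcal{I}$,
\begin{align*}
\widehat{S}_{\widehat{V}}(v) - S_V(v) &= \Big[\widehat{S}_{\widehat{V}}(v) - \widehat{S}_V(v) - S_{\widehat{V}}(v) + S_V(v)\Big] \\
&\quad + \Big[S_{\widehat{V}}(v) - S_V(v)\Big] + \Big[\widehat{S}_V(v) - S_V(v)\Big].
\end{align*}
The first bracket is $o_p(n^{-1/2})$ by Lemma~\ref{Lemmadosnkerclass}(i), hence $O_p(n^{-1/2})$. The third bracket is the centred empirical survival process of the real-valued i.i.d.\ variables $F_{T\mid X,Z}(Y_i\mid X_i,Z_i)$; the associated indicator class $\{(y,x,z)\mapsto\mathbbm{1}(F_{T\mid X,Z}(y\mid x,z)>v):v\in\mathcal{I}\}$ is uniformly bounded and VC (as established in the proof of Lemma~\ref{Lemmadosnkerclass}), so it is $O_p(n^{-1/2})$ uniformly in $v\in\mathcal{I}$.

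The substance is the middle, deterministic-given-the-sample bracket $S_{\widehat{V}}(v)-S_V(v)$. Conditioning on the subsample that defines $\widehat{F}_{T\mid X,Z}$ and summing over the finite support (Assumption~\ref{Asupport}), I would rewrite it as
\begin{align*}
S_{\widehat{V}}(v) - S_V(v) = \sum_{(x,z)\in\mathcal{R}_{X,Z}} p_{X,Z}(x,z)\Big[&\mathbbm{P}\big(\widehat{F}_{T\mid X,Z}(Y\mid x,z)>v\mid X=x,Z=z\big) \\
&- \mathbbm{P}\big(F_{T\mid X,Z}(Y\mid x,z)>v\mid X=x,Z=z\big)\Big].
\end{align*}
Since $\widehat{F}_{T\mid X,Z}(Y\mid x,z)=\widehat{G}_{x,z}\big(F_{T\mid X,Z}(Y\mid x,z)\big)$ a.s.\ with $\widehat{G}_{x,z}=\widehat{F}_{T\mid X,Z}(\cdot\mid x,z)\circ F_{T\mid X,Z}^{-1}(\cdot\mid x,z)$ a monotone self-map of $[0,1]$, and since $F_{T\mid X,Z}(Y\mid x,z)$ has a continuous conditional distribution on $\mathcal{I}$ given $X=x,Z=z$ (Assumption~\ref{supportass}), each cell term equals $S_{V\mid X,Z}\big(\widehat{G}_{x,z}^{-1}(v)\mid x,z\big)-S_{V\mid X,Z}(v\mid x,z)$ up to the asymptotically negligible jumps of the product-limit estimator. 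A first-order Taylor expansion then bounds this by the $v$-scale conditional density of $V$, namely $f_{Y\mid X,Z}\big(F_{T\mid X,Z}^{-1}(v\mid x,z)\mid x,z\big)/f_{T\mid X,Z}\big(F_{T\mid X,Z}^{-1}(v\mid x,z)\mid x,z\big)$, which is finite by Assumption~\ref{supportass}, times $\sup_{v\in\mathcal{I}}|\widehat{G}_{x,z}^{-1}(v)-v|$. The latter is, up to product-limit jumps, of the order of $\sup_{v\in\mathcal{I}}|\widehat{F}_{T\mid X,Z}(F_{T\mid X,Z}^{-1}(v\mid x,z)\mid x,z)-v|$, i.e.\ the uniform deviation of the cell-wise Kaplan--Meier estimator on $\{y:F_{T\mid X,Z}(y\mid x,z)\le1-\gamma\}$; with $\gamma>0$ and the at-risk probability bounded away from zero on that range by Assumption~\ref{supportass}, this is $O_p(n^{-1/2})$ by the standard asymptotic theory of the product-limit estimator \citep{lo1986product}. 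Combining the three brackets yields (i); en route one also obtains $\sup_{v\in\mathcal{I}}|\widehat{S}_{\widehat{V}}(v)-S_{\widehat{V}}(v)|=O_p(n^{-1/2})$, which is convenient later.

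For (ii), I would write $\widehat{S}_{\widehat{V}\mid X,W,Z}(v\mid x,w,z)=\widehat{S}_{\widehat{V},X,W,Z}(v,x,w,z)/\widehat{p}_{X,W,Z}(x,w,z)$ and apply the identity $\widehat{a}/\widehat{b}-a/b=(\widehat{a}-a)/\widehat{b}-a(\widehat{b}-b)/(b\widehat{b})$. The denominator $\widehat{p}_{X,W,Z}$ is a sample mean converging at rate $n^{-1/2}$ to $p_{X,W,Z}(x,w,z)>0$ on $\mathcal{R}_{X,W,Z}$, hence stays bounded away from zero with probability tending to one. For the numerator $\widehat{S}_{\widehat{V},X,W,Z}(v,x,w,z)-S_{V,X,W,Z}(v,x,w,z)$ I would repeat the three-term decomposition with the cell indicators $\mathbbm{1}(X_i=x,W_i=w,Z_i=z)$ attached: the equicontinuity bracket is $o_p(n^{-1/2})$ uniformly in $(v,x,w,z)$ by Lemma~\ref{Lemmadosnkerclass}(iii), the empirical-process bracket is $O_p(n^{-1/2})$ by the Donsker property, and the deterministic bracket is handled by the same change of variables and Taylor expansion, now producing the ratio $f_{Y\mid X,W,Z}(F_{T\mid X,Z}^{-1}(v\mid x,z)\mid x,w,z)/f_{T\mid X,Z}(F_{T\mid X,Z}^{-1}(v\mid x,z)\mid x,z)$, again finite by Assumption~\ref{supportass}; assembling numerator and denominator through the ratio identity gives (ii). The only genuinely delicate step is this deterministic bracket, where one must establish that the cell-wise Kaplan--Meier estimator, and equivalently its inverse, attains the parametric rate uniformly on $\mathcal{I}$, and check that this is exactly what Assumptions~\ref{Asupport} and \ref{supportass} supply --- finitely many well-populated cells, a finite $v$-scale density $f_{Y}/f_{T}$, and an at-risk probability bounded away from zero on $\{F_{T\mid X,Z}(\cdot\mid x,z)\le1-\gamma\}$; handling the product-limit estimator's jumps when passing from $\{y:\widehat{F}(y)>v\}$ to a half-line is a minor technicality of the same flavour, absorbed into the $O_p(n^{-1/2})$ term.
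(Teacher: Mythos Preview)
Your proposal is correct and follows essentially the same approach as the paper: the identical three-term decomposition via Lemma~\ref{Lemmadosnkerclass}, the same Taylor-expansion argument for the deterministic bracket $S_{\widehat V}-S_V$ reducing it to the cell-wise Kaplan--Meier rate (the paper parameterizes via $\widehat F_{T\mid X,Z}^{-1}$ directly rather than your composition $\widehat G_{x,z}$, but the computations coincide), and for (ii) your ratio identity at the joint level is exactly what produces the paper's $(1+o_p(1))$ factor when it works at the conditional level. The only cosmetic difference is that the paper cites Gill (1983) alongside Lo (1986) for the final $O_p(n^{-1/2})$ rate.
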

\begin{proof}
    We will start by giving the proof of $(i)$. Using Lemma \ref{Lemmadosnkerclass}, it is clear that $$ \widehat{S}_{\widehat{V}}(v)-S_V(v) =  \widehat{S}_{V}(v)-S_V(v)+S_{\widehat{V}}(v)-S_V(v) + o_p(n^{-1/2}),$$
uniformly in $v \in \mathcal{I}$. Note that by using a Taylor expansion, with $S_{Y\mid X,Z}(y\mid x,z) = \mathbbm{P}(Y > y \mid X=x,Z =z)$ and $f_{Y\mid X,Z}$ the corresponding density, we have that
\begin{align*}
S_{\widehat{V}}(v) & =  \sum_{(x,z) \in \mathcal{R}_{X,Z}}  S_{Y\mid X,Z} (\widehat{F}_{T \mid X,Z}^{-1}(v \mid x,z)\mid x,z)p_{X,Z}(x,z) \\ & =   \sum_{(x,z) \in \mathcal{R}_{X,Z}} \bigg[ S_{Y\mid X,Z}(F_{T \mid X,Z}^{-1}(v \mid x,z)\mid x,z)  - f_{Y\mid X,Z}(F_{T \mid X,Z}^{-1}(v \mid x,z)\mid x,z)\\ & \quad \times \big[\widehat{F}_{T \mid X,Z}^{-1}(v \mid x,z) - F_{T \mid X,Z}^{-1}(v \mid x,z)\big] \bigg]p_{X,Z}(x,z) + o_p(n^{-1/2}),
\end{align*}
such that
\begin{align*}
& S_{\widehat{V}}(v)-S_V(v)  \\ & \quad =  \sum_{(x,z) \in \mathcal{R}_{X,Z}} f_{Y\mid X,Z}(F_{T \mid X,Z}^{-1}(v \mid x,z)\mid x,z) \left[F_{T \mid X,Z}^{-1}(v \mid x,z)-\widehat{F}_{T \mid X,Z}^{-1}(v \mid x,z)\right]p_{X,Z}(x,z) \\ & \qquad + o_p(n^{-1/2}) \\ & \quad = \sum_{(x,z) \in \mathcal{R}_{X,Z}}  \frac{f_{Y\mid X,Z}(F_{T \mid X,Z}^{-1}(v \mid x,z)\mid x,z)}{f_{T\mid X,Z}(F_{T \mid X,Z}^{-1}(v \mid x,z)\mid x,z)} \left[ \widehat{F}_{V_T \mid X,Z}(v \mid x,z)-F_{V_T \mid X,Z}(v \mid x,z)\right]p_{X,Z}(x,z) \\ & \qquad + o_p(n^{-1/2}),
\end{align*}
uniformly in $v \in \mathcal{I}$ and where the second equality follows from Theorem 2 by \cite{lo1986product}. Using Theorem 1.2 by \cite{gill1983large}, it now follows that $\sup_{v \in \mathcal{I}}\left\lvert S_{\widehat{V}}(v)-S_V(v)\right\rvert =O_p(n^{-1/2})$. Moreover, using standard empirical process results, it is clear that $\sup_{v \in \mathcal{I}}\left\lvert \widehat{S}_{V}(v)-S_V(v)\right\rvert =O_p(n^{-1/2})$, from which the result follows. We continue with the proof of $(ii)$. Applying Lemma \ref{Lemmadosnkerclass}, we have that
\begin{align*}
& \widehat{S}_{\widehat{V}\mid X,W,Z}(v\mid x,w,z)-S_{V \mid X,W,Z}(v \mid x,w,z) \\ \quad&=  \widehat{S}_{V\mid X,W,Z}(v\mid x,w,z)-S_{V\mid X,W,Z}(v\mid x,w,z)\\& \quad +\left[ S_{\widehat{V}\mid X,W,Z}(v\mid x,w,z)-S_{V\mid X,W,Z}(v\mid x,w,z)\right]\times\left(1+o_p(1)\right)  + o_p(n^{-1/2}),\end{align*} 
uniformly in $(v,x,w,z) \in \mathcal{I}\times \mathcal{R}_{X,W,Z}$. Using a similar Taylor expansion as in the proof of $(i)$, we have that
\begin{align*}
& S_{\widehat{V}\mid X,W,Z}(v\mid x,w,z)-S_{V\mid X,W,Z}(v\mid x,w,z)  \\ & \quad = \frac{f_{Y\mid X,W,Z}(F_{T \mid X,Z}^{-1}(v \mid x,w,z)\mid x,w,z)}{f_{T\mid X,Z}(F_{T \mid X,Z}^{-1}(v \mid x,z)\mid x,z)} \left[\widehat{F}_{V_T \mid X,Z}(v \mid x,z) - F_{V_T \mid X,Z}(v \mid x,z)\right]  + o_p(n^{-1/2}),
\end{align*}
uniformly in $(v,x,w,z) \in \mathcal{I} \times \mathcal{R}_{X,W,Z}$. The result now follows from using the same empirical process arguments as those used in the proof of $(i)$.
\end{proof}

\begin{lemma}\label{lemmaremainder0}
    Under Assumptions \ref{Asupport} - \ref{supportass}, we have that
    \begin{itemize}
        \item[(i)] $
    \sup_{v \in \mathcal{I}}\left\lvert\int_0^v\left(\widehat{S}^{-1}_{\widehat{V}}(u)-S^{-1}_{V}(u)\right)d\left(\widehat{S}_{\widehat{V},1}(u)-S_{V,1}(u)\right)\right\rvert = o_p(n^{-1/2}),
    $
    \item[(ii)] $ \begin{aligned}
    \sup_{v \in \mathcal{I}, (x,w,z) \in \mathcal{R}_{X,W,Z}}\Bigg\lvert&\int_0^v\left(\widehat{S}^{-1}_{\widehat{V}\mid X,W,Z}(u\mid x,w,z)-S^{-1}_{V\mid X,W,Z}(u\mid x,w,z)\right) \\ & \qquad d\bigg(\widehat{S}_{\widehat{V},1\mid X,W,Z}(u\mid x,w,z)-S_{V,1\mid X,W,Z}(u\mid x,w,z)\bigg)\Bigg\rvert = o_p(n^{-1/2}).    
    \end{aligned}
    $
    \end{itemize}
\end{lemma}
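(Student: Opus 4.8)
The plan is to reduce the integral to a finite sum of second-order degenerate $U$-statistics, each of order $O_p(n^{-1})$, via integration by parts followed by two rounds of linearization. It suffices to treat $(i)$: part $(ii)$ is the same argument carried out separately for each of the finitely many $(x,w,z)\in\mathcal{R}_{X,W,Z}$ (Assumption \ref{Asupport}), with the factor $\widehat{p}_{X,W,Z}(x,w,z)\to p_{X,W,Z}(x,w,z)>0$ handled by Slutsky's lemma and the finite supremum over $\mathcal{R}_{X,W,Z}$ costing nothing. Write $A_n(u)=\widehat{S}^{-1}_{\widehat{V}}(u)-S^{-1}_{V}(u)$ and $B_n(u)=\widehat{S}_{\widehat{V},1}(u)-S_{V,1}(u)$. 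From Assumption \ref{supportass} and the choice of $\gamma$, $S_V$ and $\widehat{S}_{\widehat{V}}$ are bounded away from $0$ on $\mathcal{I}$ with probability tending to one, so $A_n$ has $O_p(1)$ total variation on $\mathcal{I}$; the same holds for $B_n$ by monotonicity of $\widehat{S}_{\widehat{V},1}$ and $S_{V,1}$. Moreover, Lemmas \ref{Lemmadosnkerclass}--\ref{lemmabigO} (and the arguments in their proofs) give $\sup_{v\in\mathcal{I}}\big(\lvert A_n(v)\rvert+\lvert B_n(v)\rvert\big)=O_p(n^{-1/2})$.

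First I would integrate by parts,
\[
\int_0^v A_n(u)\,dB_n(u) = A_n(v)B_n(v) - A_n(0)B_n(0) - \int_0^v B_n(u)\,dA_n(u),
\]
so the two boundary terms are $O_p(n^{-1})=o_p(n^{-1/2})$ and it remains to control $\int_0^v B_n\,dA_n$. Next I would linearize both factors. By Lemma \ref{Lemmadosnkerclass}, $B_n=(\widehat{S}_{V,1}-S_{V,1})+(S_{\widehat{V},1}-S_{V,1})+r_n^{B}$, and, combining Lemma \ref{Lemmadosnkerclass} with a first-order Taylor expansion of $s\mapsto s^{-1}$ as in the proof of Lemma \ref{lemmabigO}, $A_n=-S_V^{-2}\big[(\widehat{S}_{V}-S_{V})+(S_{\widehat{V}}-S_{V})\big]+r_n^{A}$, where $\lVert r_n^{A}\rVert_\infty+\lVert r_n^{B}\rVert_\infty=o_p(n^{-1/2})$ and $r_n^{A},r_n^{B}$ have $O_p(1)$ total variation. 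Since every factor here has both $O_p(n^{-1/2})$ uniform norm and $O_p(1)$ total variation on $\mathcal{I}$, each term of $\int_0^v B_n\,dA_n$ that contains a remainder is $o_p(n^{-1/2})$ --- directly (a remainder integrated against an $O_p(1)$-variation measure) or after one further integration by parts. Finally I would linearize the bias processes $S_{\widehat{V}}-S_{V}$ and $S_{\widehat{V},1}-S_{V,1}$ through the asymptotically linear (martingale) representation of the conditional Kaplan--Meier estimator $\widehat{F}_{T\mid X,Z}$ \citep{lo1986product,gill1983large}, already used in the proof of Lemma \ref{lemmabigO}, which writes $\widehat{F}^{-1}_{T\mid X,Z}(\cdot\mid x,z)-F^{-1}_{T\mid X,Z}(\cdot\mid x,z)$ as $n^{-1}\sum_i$ of terms with mean zero given $(X_i,Z_i)$, up to an $o_p(n^{-1/2})$ remainder of $O_p(1)$ total variation.

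After these substitutions, $\int_0^v B_n\,dA_n$ equals, uniformly in $v\in\mathcal{I}$ and up to $o_p(n^{-1/2})$, a finite sum of bilinear forms $n^{-2}\sum_{i,j}\kappa_v(O_i,O_j)$ with $O_i=(Y_i,\Delta_i,X_i,Z_i)$, in which one argument enters through a mean-zero influence function and the other through the increment of a centered, counting-process-type influence function such as $d(\widehat{S}_{V,1}-S_{V,1})$ (a point mass minus its compensator). The decisive point is that both one-dimensional conditional projections of $\kappa_v$ then vanish --- equivalently, these are iterated stochastic integrals against orthogonal i.i.d.-sum martingales --- so each bilinear form is a completely degenerate $U$-statistic: its diagonal is $O_p(n^{-1})$ and its off-diagonal part is a degenerate $U$-statistic, hence also $O_p(n^{-1})$, uniformly in $v$ by a maximal inequality over the Donsker index class used in Lemma \ref{Lemmadosnkerclass}. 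Collecting the pieces yields $\sup_{v\in\mathcal{I}}\lvert\int_0^v A_n\,dB_n\rvert=O_p(n^{-1})=o_p(n^{-1/2})$.

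The main obstacle will be the two-layer plug-in structure: the inner estimator $\widehat{F}_{T\mid X,Z}$ enters the outer object through $\widehat{S}_{\widehat{V}}$, $\widehat{S}_{\widehat{V},1}$ and their reciprocals, so none of these are i.i.d.\ sums and the linearizations must be propagated carefully, keeping total variation under control so that the numerous remainder terms stay $o_p(n^{-1/2})$ rather than merely $O_p(n^{-1/2})$. The heart of the argument is then verifying the double degeneracy of every surviving bilinear term; it is this degeneracy, and not the crude bound $\lVert A_n\rVert_\infty\cdot\mathrm{TV}(B_n)=O_p(n^{-1/2})$, that produces the sharper $o_p(n^{-1/2})$ rate asserted by the lemma.
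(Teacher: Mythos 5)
Your route is genuinely different from the paper's. The paper follows the classical Lo--Singh discretization argument: divide $\mathcal{I}$ into $k_n=O(n^{1/2}\log(n)^{-1/2})$ intervals, bound the integral by a term of the form $k_n\cdot\sup|A_n|\cdot\max_i|B_n(l_{i+1})-B_n(l_i)|$ plus a maximal oscillation term, then iterate the subdivision and control the oscillations through Lemma~\ref{Lemmadosnkerclass}, the Taylor expansion from Lemma~\ref{lemmabigO}, and Lo--Singh's own modulus-of-continuity bounds, landing on an $O_p(n^{-3/4}\log(n)^{3/4})$ rate. You instead integrate by parts, linearize, and invoke degeneracy of the resulting second-order $U$-statistic. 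That idea is attractive --- it would give the sharper $O_p(n^{-1})$ rate in the classical i.i.d.\ Kaplan--Meier setting, and it is conceptually cleaner. But as written your argument has two unfilled gaps that are not routine.

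First, the uniformity step. You dismiss it with ``uniformly in $v$ by a maximal inequality over the Donsker index class used in Lemma~\ref{Lemmadosnkerclass}.'' But Donsker theory and its maximal inequalities control \emph{first-order} empirical processes; they say nothing about the supremum over $v\in\mathcal{I}$ of a family of \emph{degenerate second-order $U$-statistics} $n^{-2}\sum_{i\ne j}\kappa_v(O_i,O_j)$. For that you need $U$-process machinery (chaining with Rademacher decoupling, uniform entropy conditions on the class $\{\kappa_v:v\in\mathcal{I}\}$ rather than on $\mathcal{F}$). This is a separate theory, not a corollary of the bracketing entropy computation in Lemma~\ref{Lemmadosnkerclass}, and invoking it costs at least as much work as the paper's discretization. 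Second, the double degeneracy itself is asserted rather than shown. The two-layer plug-in structure means $\widehat{S}_{\widehat{V}}$ is not an i.i.d.\ sum; unwinding it via Lemma~\ref{Lemmadosnkerclass} and Lo--Singh Theorem~1 produces an i.i.d.\ piece from $\widehat{S}_V$, another from $S_{\widehat{V}}$, the reciprocal linearization $-S_V^{-2}(\cdot)$ mixing in an absolutely continuous part of $dA_n$, and several remainders. You would need to check degeneracy term by term; the cross-term between the outer empirical part and the inner Kaplan--Meier influence function is the delicate one, because both are built from the \emph{same} $\{Y_i,\Delta_i,X_i,Z_i\}$ and the conditioning that kills one projection is entangled with the conditioning in the compensator. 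It may well go through, but ``the two projections then vanish --- equivalently, these are iterated stochastic integrals against orthogonal i.i.d.-sum martingales'' is the hard part of the proof, not a remark. The paper's chaining argument sidesteps both issues at once, which is why it is the standard device here.

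A smaller point: the bound $\lvert\int_0^v r_n^B\,dA_n\rvert\le\lVert r_n^B\rVert_\infty\cdot\mathrm{TV}(A_n)$ that you use for the remainder terms does go through, since $A_n$ is a difference of two monotone functions bounded on $\mathcal{I}$ under Assumption~\ref{supportass}. But note that this same crude bound applied to the \emph{main} term already gives $O_p(n^{-1/2})\cdot O_p(1)=O_p(n^{-1/2})$, which is not good enough --- you correctly identify in your final paragraph that the degeneracy must do the real work, which is precisely where the argument is left open.
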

\begin{proof}
    We will start by giving the proof of $(i)$. Following the proof of Lemma 2 by \cite{lo1986product}, we start by dividing $\mathcal{I}$ into subintervals $[l_i,l_{i+1}],$ $i=1,2,\dots,k_n$ where $k_n = O(n^{1/2}\log(n)^{-1/2})$ and $0 =l_1 <l_2<\dots < l_{k_n+1} = 1-\gamma$ are such that $S_V(l_i)-S_V(l_{i+1}) \leq Mn^{-1/2}\log(n)^{1/2}$, where $M$ is some constant whose value may change from line to line. We then have that 
    \begin{align}
        & \left\lvert\int_0^v\left({\widehat{S}^{-1}_{\widehat{V}}(u)}-{S^{-1}_{V}(u)}\right)d\left(\widehat{S}_{\widehat{V},1}(u)-S_{V,1}(u)\right)\right\rvert \nonumber \\& \quad  \leq k_n \sup_{u\in \mathcal{I}}\left\lvert {\widehat{S}^{-1}_{\widehat{V}}(u)}-{S^{-1}_{V}(u)} \right\rvert \max_{1\leq i \leq k_n}\left\lvert \widehat{S}_{\widehat{V},1}(l_{i+1}) -\widehat{S}_{\widehat{V},1}(l_{i})-S_{V,1}(l_{i+1})+S_{V,1}(l_i)\right\rvert \label{termA} \\& \quad + 2 \max_{1\leq i \leq k_n} \sup_{u\in [l_i,l_{i+1}]}\left\lvert{\widehat{S}^{-1}_{\widehat{V}}(u)} - {\widehat{S}^{-1}_{\widehat{V}}(l_i)}-{S^{-1}_{V}(u)}+{S^{-1}_{V}(l_i)}\right\rvert. \label{termB}
    \end{align}
    We first investigate \eqref{termB}, and start by further dividing each $[l_i,l_{i+1}]$ into the subintervals $[l_{ij},l_{i(j+1)}]$, $j=1,\dots,a_n$ such that $\left\lvert S_{V}(l_{i(j+1)})-S_{V}(l_{ij})\right\rvert = O(n^{-3/4}\log(n)^{3/4})$ uniformly in $i,j$, and $a_n=O(n^{1/4}\log(n)^{-1/4})$. Thanks to our Lemma \ref{lemmabigO}, it follows that
    \begin{align*}
        & \sup_{u\in [l_i,l_{i+1}]}\left\lvert{\widehat{S}^{-1}_{\widehat{V}}(u)} - {\widehat{S}^{-1}_{\widehat{V}}(l_i)}-{S^{-1}_{V}(u)}+{S^{-1}_{V}(l_i)}\right\rvert  \\ & \quad \leq \sup_{u\in [l_i,l_{i+1}]}\left\lvert\frac{\widehat{S}_{\widehat{V}}(u)-S_{V}(u)}{S_{V}(u)^2} - \frac{\widehat{S}_{\widehat{V}}(l_i)-S_{V}(l_i)}{S_{V}(l_i)^2}\right\rvert +O_p(n^{-1}\log(n))  \\ & \quad \leq \sup_{u\in [l_i,l_{i+1}]}\frac{\left\lvert \widehat{S}_{\widehat{V}}(u)-\widehat{S}_{\widehat{V}}(l_i)-S_{V}(u) +S_{V}(l_i)  \right\rvert}{S_{V}(l_{i+1})^2} +O_p(n^{-1}\log(n)) \\ & \quad \leq M\max_{1 \leq j \leq a_n}\left\lvert \widehat{S}_{\widehat{V}}(l_{ij})-\widehat{S}_{\widehat{V}}(l_i)-S_{V}(l_{ij}) +S_{V}(l_i)  \right\rvert +O_p(n^{-3/4}\log(n)^{3/4}).
    \end{align*}
    From Lemma \ref{Lemmadosnkerclass}, it follows that
    \begin{align}
        & M\max_{1 \leq j \leq a_n}\left\lvert \widehat{S}_{\widehat{V}}(l_{ij})-\widehat{S}_{\widehat{V}}(l_i)-S_{V}(l_{ij}) +S_{V}(l_i)  \right\rvert \nonumber \\ & \quad \leq M\max_{1 \leq j \leq a_n}\left\lvert \widehat{S}_{V}(l_{ij})-\widehat{S}_{V}(l_i)-S_{V}(l_{ij}) +S_{V}(l_i)  \right\rvert \label{oldtermiid} \\ & \qquad + M\max_{1 \leq j \leq a_n}\left\lvert S_{\widehat{V}}(l_{ij})-S_{\widehat{V}}(l_i)-S_{V}(l_{ij}) +S_{V}(l_i)  \right\rvert + o_p(n^{-1/2}). \label{newtermiid}
    \end{align}
    Since \eqref{oldtermiid} is the same term as the one in Lemma 2 by \cite{lo1986product}, we know that it is $O_p({n}^{-3/4}\log(n)^{3/4})$. Using the same Taylor expansion from the beginning of the proof of $(i)$ from our Lemma \ref{lemmabigO}, we have that 
      \begin{align*}
        &  M\max_{1 \leq j \leq a_n}\left\lvert S_{\widehat{V}}(l_{ij})-S_{\widehat{V}}(l_i)-S_{V}(l_{ij}) +S_{V}(l_i)  \right\rvert \\ & \leq Md_{x,z}\max_{(x,z)\in \mathcal{R}_{X,Z}}\max_{1 \leq j \leq a_n}\Big\lvert \Psi_{x,z}(l_{ij})\widehat{F}_{V_T \mid X,Z}(l_{ij} \mid x,z) - \Psi_{x,z}(l_{i})\widehat{F}_{V_T \mid X,Z}(l_{i} \mid x,z) \\ & \qquad  - \Psi_{x,z}(l_{ij})F_{V_T \mid X,Z}(l_{ij}  \mid x,z)+\Psi_{x,z}(l_{i})F_{V_T \mid X,Z}(l_{i}  \mid x,z) \Big\rvert +o_p(n^{-1/2}), 
    \end{align*}
    where $d_{x,z} = \lvert \mathcal{R}_{X,Z} \rvert$ and $\Psi_{x,z}(u)=f_{Y\mid X,Z}(F_{T \mid X,Z}^{-1}(u \mid x,z)\mid x,z)/f_{T\mid X,Z}(F_{T \mid X,Z}^{-1}(u \mid x,z)\mid x,z)$. Similar to the steps of the proof of Theorem 2 by \cite{lo1986product}, we find that
    \begin{align*}
        & \max_{1 \leq j \leq a_n}\Big\lvert  \Psi_{x,z}(l_{ij})\widehat{F}_{V_T \mid X,Z}(l_{ij} \mid x,z) - \Psi_{x,z}(l_{i})\widehat{F}_{V_T \mid X,Z}(l_{i} \mid x,z)  \\ & \qquad \qquad - \Psi_{x,z}(l_{ij})F_{V_T \mid X,Z}(l_{ij}  \mid x,z)+\Psi_{x,z}(l_{i})F_{V_T \mid X,Z}(l_{i}  \mid x,z) \Big\rvert  = O_p({n}^{-3/4}\log(n)^{3/4}).
    \end{align*}
    Since the proof of \eqref{termA} is similar, it is omitted. The proof of $(ii)$ is almost identical to the proof of $(i)$, and therefore also omitted.
\end{proof}

\begin{lemma}\label{lemmaasymptotrep}
    Under Assumptions \ref{Asupport} - \ref{supportass}, we have that

    \begin{itemize}
        \item[(i)] $
    \sup_{v \in \mathcal{I}}\left\lvert\widehat{F}_{\widehat{V}_T}(v)-v\right\rvert =o_p(n^{-1/2}),
    $
        \item[(ii)] $
    \sup_{v \in \mathcal{I},(x,w,z)\in \mathcal{R}_{X,W,Z}}\left\lvert\widehat{F}_{\widehat{V}_T\mid X,W,Z}(v\mid x,w,z)-F_{V_T\mid X,W,Z}(v\mid x,w,z)\right\rvert =O_p(n^{-1/2}).
    $
    \end{itemize}

\end{lemma}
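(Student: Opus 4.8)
The plan is to prove both parts by linearising the relevant (conditional) Kaplan--Meier estimator through the Duhamel product-integral identity and then reducing the resulting terms with Lemmas~\ref{Lemmadosnkerclass}--\ref{lemmaremainder0}; the difference between the two parts is that (i) demands an exact first-order cancellation to obtain the $o_p(n^{-1/2})$ rate, while (ii) only needs the $O_p(n^{-1/2})$ rate and so is essentially bookkeeping once the lemmas are available.

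For part (i), note first that $V_T\sim\mathcal U[0,1]$ by the probability integral transform, so $\Lambda_{V_T}(v)=-\log(1-v)$, and write, via Duhamel,
$$
\widehat F_{\widehat V_T}(v)-v=(1-v)\int_0^v\frac{1-\widehat F_{\widehat V_T}(u^-)}{1-u}\,d\big(\widehat\Lambda_{\widehat V}(u)-\Lambda_{V_T}(u)\big),
$$
where $\widehat\Lambda_{\widehat V}$ is the Nelson--Aalen estimator of the cumulative hazard of $V_T$ formed from $\widehat S_{\widehat V,1}$ and $\widehat S_{\widehat V}$, i.e.\ from the estimated ranks. An easy consistency argument (Lemma~\ref{lemmabigO}(i) together with the Glivenko--Cantelli content of Lemma~\ref{Lemmadosnkerclass}) shows the prefactor is $1+o_p(1)$ with uniformly bounded variation, so after integration by parts it suffices to prove $\sup_{v\in\mathcal I}|\widehat\Lambda_{\widehat V}(v)-\Lambda_{V_T}(v)|=o_p(n^{-1/2})$. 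Decomposing this difference into terms of the form $\int_0^v S_V(u)^{-1}\,d(\widehat S_{\widehat V,1}-S_{V,1})(u)$ and $\int_0^v (\widehat S_{\widehat V}-S_V)(u)\,S_V(u)^{-2}\,dS_{V,1}(u)$ plus a remainder, the remainder is $o_p(n^{-1/2})$ by Lemma~\ref{lemmaremainder0}(i) since $1/S_V$ is bounded on $\mathcal I\subseteq[0,1-\gamma]$ (Assumption~\ref{supportass} and $\gamma>0$). Applying Lemma~\ref{Lemmadosnkerclass}(i)--(ii) then splits $\widehat S_{\widehat V}-S_V$ and $\widehat S_{\widehat V,1}-S_{V,1}$, modulo $o_p(n^{-1/2})$, into an ``oracle'' part in the true ranks ($\widehat S_V-S_V$, $\widehat S_{V,1}-S_{V,1}$) and a part that is deterministic given $\widehat F_{T\mid X,Z}$ ($S_{\widehat V}-S_V$, $S_{\widehat V,1}-S_{V,1}$). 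The oracle part reassembles into $n^{-1}\sum_{i=1}^n$ of the pooled (unconditional) version of the influence function $\xi_i(\cdot,x,z)$ from Section~\ref{asymptoticsection}; the deterministic part, expanded as in the proof of Lemma~\ref{lemmabigO} (with the ratio $\Psi_{x,z}$ from the proof of Lemma~\ref{lemmaremainder0}), reassembles into a sum over $(x,z)\in\mathcal R_{X,Z}$, with weights $p_{X,Z}(x,z)$, of integrals involving $\Psi_{x,z}$ and the stratumwise Kaplan--Meier errors $u\mapsto\widehat F_{V_T\mid X,Z}(u\mid x,z)-u$ (recall $F_{V_T\mid X,Z}(v\mid x,z)=v$).

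The crux, and the step I expect to be the main obstacle, is to show that these two contributions cancel up to $o_p(n^{-1/2})$ uniformly in $v\in\mathcal I$. This is the censored-data analogue of the elementary fact that, in the absence of censoring, the empirical c.d.f.\ of $\{\widehat F_{T\mid X,Z}(T_i\mid X_i,Z_i)\}_{i=1}^n$ is, within each stratum, the regular grid $\{1/m_{x,z},\dots,1\}$, so its pooled version lies within $O(n^{-1})=o(n^{-1/2})$ of the identity: because the ranks are re-estimated from the very Kaplan--Meier fit whose image we then re-estimate, the first-order effect of the plug-in error exactly offsets the first-order fluctuation of the outer estimator. I would make this precise by writing the oracle influence function through its own martingale/stratumwise representation and matching it term by term against the Taylor-expanded deterministic term, using the identity $\widehat F_{V_T\mid X,Z}(v\mid x,z)=\widehat F_{T\mid X,Z}\big(F_{T\mid X,Z}^{-1}(v\mid x,z)\mid x,z\big)$ and the change of variables $u=F_{T\mid X,Z}(y\mid x,z)$ so that the weight $\Psi_{x,z}$ cancels the Jacobian $f_{T\mid X,Z}$. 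Combining the cancellation with Lemma~\ref{lemmaremainder0}(i) and the standard $O_p(n^{-1}\log n)$ bound for the gap between the product-limit estimator and the exponential of its cumulative-hazard estimator yields part (i).

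Part (ii) needs neither the uniform structure nor the cancellation: only the rate $O_p(n^{-1/2})$ is claimed, and since $\mathcal R_{X,W,Z}$ is finite (Assumption~\ref{Asupport}) the supremum over $(x,w,z)$ is a maximum over finitely many strata. Here Assumption~\ref{AindepTC} gives $V_T\indep V_C\mid X,W,Z$, so $\widehat F_{\widehat V_T\mid X,W,Z}(\cdot\mid x,w,z)$ is a bona fide conditional Kaplan--Meier estimator of $F_{V_T\mid X,W,Z}(\cdot\mid x,w,z)$ and the same Duhamel linearisation applies with the conditional quantities $\widehat S_{\widehat V,1\mid X,W,Z}$ and $\widehat S_{\widehat V\mid X,W,Z}$. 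The remainder term is $o_p(n^{-1/2})$ by Lemma~\ref{lemmaremainder0}(ii); the leading linear term is $O_p(n^{-1/2})$ uniformly in $(v,x,w,z)\in\mathcal I\times\mathcal R_{X,W,Z}$ by Lemma~\ref{lemmabigO}(ii) for $\widehat S_{\widehat V\mid X,W,Z}-S_{V\mid X,W,Z}$ and by Lemma~\ref{Lemmadosnkerclass}(iv) together with a Taylor expansion exactly as in the proof of Lemma~\ref{lemmabigO} for $\widehat S_{\widehat V,1\mid X,W,Z}-S_{V,1\mid X,W,Z}$, with $1/S_{V\mid X,W,Z}(\cdot\mid x,w,z)$ bounded on $\mathcal I$ by Assumption~\ref{supportass} and $\gamma>0$. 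Plugging these $O_p(n^{-1/2})$ bounds into the linearisation and bounding the Duhamel prefactor via the consistency of $\widehat F_{\widehat V_T\mid X,W,Z}$ gives the claim.
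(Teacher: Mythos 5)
Your route is essentially the paper's. The paper linearises $\log\bigl(1-\widehat F_{\widehat V_T}(v)\bigr)$ following Theorem~1 of \cite{lo1986product} rather than applying Duhamel directly to $\widehat F_{\widehat V_T}(v)-v$, but this is a cosmetic difference; thereafter you make exactly the same moves: split $\widehat S_{\widehat V}-S_V$ and $\widehat S_{\widehat V,1}-S_{V,1}$ via Lemma~\ref{Lemmadosnkerclass} into an oracle piece and a piece deterministic given $\widehat F_{T\mid X,Z}$, absorb the cross remainder via Lemma~\ref{lemmaremainder0}, Taylor-expand the deterministic piece stratumwise, and cancel against the oracle influence function. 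You also correctly identify the first-order cancellation as the crux of part (i), and your heuristic (the uncensored regular-grid picture) is the right one.

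Where the proposal falls short is that you sketch the cancellation rather than verify it, and the verification is the substance of the proof. Beyond the change of variables you mention, the paper's computation relies on a chain of identities that must be displayed: after Theorem~2 of \cite{lo1986product} converts $\widehat F^{-1}_{T\mid X,Z}-F^{-1}_{T\mid X,Z}$ into $\widehat F_{V_T\mid X,Z}-F_{V_T\mid X,Z}$ (cancelling $\Psi_{x,z}$ against the Jacobian, as you anticipated), one substitutes the conditional Kaplan--Meier influence function $\xi_i(\cdot,x,z)$ from Theorem~1 of \cite{lo1986product}, integrates by parts, uses $V_T\indep(X,Z)$ to replace $S_{V_T\mid X,Z}$ by $S_{V_T}$, and crucially invokes
\[
dS_{V,1\mid X,Z}(u\mid x,z)=-f_{V,1}(u)\,\frac{S_{V_C\mid X,Z}(u\mid x,z)}{S_{V_C}(u)}\,du,
\]
which is what lets the $p_{X,Z}$-weighted sum over strata collapse to exactly the negative of the oracle linear term. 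Saying you would match terms via a martingale representation does not establish that these simplifications go through; the algebra is not automatic and is where the lemma actually lives. Your treatment of (ii) is lighter than the paper's, which derives the full influence-function representation reused in Theorem~\ref{TH1}; since the lemma only asserts the $O_p(n^{-1/2})$ rate, your argument via Lemmas~\ref{Lemmadosnkerclass}, \ref{lemmabigO} and \ref{lemmaremainder0} is adequate for the statement as written.
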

\begin{proof}
We will start by giving the proof of $(i)$. Using Lemma \ref{lemmabigO}, we can follow the same steps from the proof of Theorem 1 by \cite{lo1986product} to show that
\begin{align}
& \log\left(1-\widehat{F}_{\widehat{V}_T}(v)\right) - \log(1-v) \nonumber \\& \quad =-\int_0^v\frac{\widehat{S}_{\widehat{V}}(u)}{S_V(u)^2}dS_{V,1}(u)+\int_0^v\frac{d\widehat{S}_{\widehat{V},1}(u)}{S_V(u)} \label{losinghiid} \\& \qquad + \int_0^v\left({\widehat{S}^{-1}_{\widehat{V}}(u)}-{S^{-1}_{V}(u)}\right)d\left(\widehat{S}_{\widehat{V},1}(u)-S_{V,1}(u)\right) + o_p(n^{-1/2}), \label{remainderlosingh}
\end{align}
uniformly in $v \in \mathcal{I}$. In view of Lemma \ref{lemmaremainder0}, it is clear that \eqref{remainderlosingh} is $o_p(n^{-1/2})$ uniformly in $v\in \mathcal{I}$. Further, using Lemma \ref{Lemmadosnkerclass}, it follows that \eqref{losinghiid} can be written as
\begin{align}
     -\int_0^v\frac{\widehat{S}_{V}(u)}{S_V(u)^2}dS_{V,1}(u)+\int_0^v\frac{d\widehat{S}_{V,1}(u)}{S_V(u)}  -\int_0^v\frac{S_{\widehat{V}}(u)}{S_V(u)^2}dS_{V,1}(u)+\int_0^v\frac{dS_{\widehat{V},1}(u)}{S_V(u)} +o_p(n^{-1/2}),\label{remaindervhat} 
\end{align}
uniformly in $v\in \mathcal{I}$. Note that the first two terms of \eqref{remaindervhat} are the same terms from Theorem 1 by \cite{lo1986product}. Using the same Taylor expansion as in the proof of Lemma \ref{lemmabigO}, we get that
\begin{align*}
S_{\widehat{V}}(v) & =  \sum_{(x,z) \in \mathcal{R}_{X,Z}} S_{Y\mid X,Z}(\widehat{F}_{T \mid X,Z}^{-1}(v \mid x,z)\mid x,z)p_{X,Z}(x,z) \\ & =   \sum_{(x,z) \in \mathcal{R}_{X,Z}} \bigg[ S_{Y\mid X,Z}(F_{T \mid X,Z}^{-1}(v \mid x,z)\mid x,z)  - f_{Y\mid X,Z}(F_{T \mid X,Z}^{-1}(v \mid x,z)\mid x,z)\\ & \quad \times \big[\widehat{F}_{T \mid X,Z}^{-1}(v \mid x,z) - F_{T \mid X,Z}^{-1}(v \mid x,z)\big] \bigg]p_{X,Z}(x,z) + o_p(n^{-1/2}),
\end{align*}
uniformly in $v \in \mathcal{I}$, and similarly for $S_{\widehat{V},1}(v)$. Therefore, we get that the last two terms of \eqref{remaindervhat} are equal to
\begin{align*}
     & \sum_{(x,z) \in \mathcal{R}_{X,Z}} p_{X,Z}(x,z)\Bigg\{\int_0^v\frac{f_{V\mid X,Z}(u\mid x,z) \big[\widehat{F}_{T \mid X,Z}^{-1}(u \mid x,z) - F_{T \mid X,Z}^{-1}(u \mid x,z)\big]}{S_{V}(u)^2}dS_{V,1}(u) \\ & \qquad -\int_0^v\frac{d\Big(f_{V,1\mid X,Z}(u\mid x,z) \big[\widehat{F}_{T \mid X,Z}^{-1}(u \mid x,z) - F_{T \mid X,Z}^{-1}(u \mid x,z)\big]\Big)}{S_{V}(u)}\Bigg\} +o_p(n^{-1/2}),
\end{align*}
uniformly in $v\in \mathcal{I}$ with $f_{V\mid X,Z}$ and $f_{V,1\mid X,Z}$ the corresponding densities of $S_{V\mid X,Z}$ and $S_{V,1\mid X,Z}$ respectively. Let $S_{V_T}(v)=\mathbbm{P}(V_T > v)$ and $S_{V_C}(v)=\mathbbm{P}(V_C > v)$. Using Theorem 2 by \cite{lo1986product} and the fact that $V_T \indep X,Z$, this can be rewritten as 
\begin{align}
    & \sum_{(x,z) \in \mathcal{R}_{X,Z}} p_{X,Z}(x,z)\Bigg\{\int_0^v\frac{f_{V\mid X,Z}(u\mid x,z) \big[\widehat{F}_{V_T \mid X,Z}(u \mid x,z) - F_{V_T \mid X,Z}(u \mid x,z)\big]}{S_{V}(u)S_{V_T}(u)}du \nonumber \\ &  \qquad +\int_0^v\frac{d\Big(S_{V_C\mid X,Z}(u\mid x,z) \big[\widehat{F}_{V_T \mid X,Z}(u \mid x,z) - F_{V_T \mid X,Z}(u \mid x,z)\big]\Big)}{S_{V}(u)}\Bigg\} +o_p(n^{-1/2}), \label{pluggedinrem}
\end{align}
uniformly in $v \in \mathcal{I}$. Note that we have also used that $dS_{V,1}(u)=-f_{V_T}(u)S_{V_C}(u)du$ with $f_{V_T}$ the corresponding density of $S_{V_T}$. Further, let $N_{i,x,z}(v) = \mathbbm{1}(V_i\geq v,X_i=x,Z_i = z)$ and $N^1_{i,x,z}(v)=\mathbbm{1}(V_i\geq v, \Delta_i = 1,X_i=x,Z_i = z)$. Adapting Theorem 1 by \cite{lo1986product} to the conditional Kaplan-Meier, we have that
\begin{align*}
    \widehat{F}_{V_T \mid X,Z}(u \mid x,z) - F_{V_T \mid X,Z}(u \mid x,z) & = \frac{S_{V_T\mid X,Z}(u\mid x,z)}{\sum_{i=1}^n\mathbbm{1}(X_i=x,Z_i=z)}\sum_{i=1}^n\xi_i(u,x,z) + o_p(n^{-1/2}),
\end{align*}
uniformly in $v \in \mathcal{I}$ with $$\xi_i(u,x,z) = \int^u_0\frac{N_{i,x,z}(s)dS_{V,1\mid X,Z}(s\mid x,z)}{S_{V\mid X,Z}(s\mid x,z)^2}-\int^u_0\frac{dN^1_{i,x,z}(s)}{S_{V\mid X,Z}(s\mid x,z)}.$$ Therefore, using again that $V_T \indep X,Z$, we have that \eqref{pluggedinrem} is equal to
\begin{align*}
     n^{-1} \sum_{(x,z) \in \mathcal{R}_{X,Z}}  \Bigg\{&\int_0^v\frac{f_{V\mid X,Z}(u\mid x,z)   \sum_{i=1}^n \xi_i(u,x,z) }{S_{V}(u)}du \\ &  \quad +\int_0^v\frac{d\Big(S_{V\mid X,Z}(u\mid x,z)  \sum_{i=1}^n \xi_i(u,x,z)\Big)}{S_{V}(u)}\Bigg\} +o_p(n^{-1/2}),
\end{align*}
which can be simplified to
\begin{align*}
    & n^{-1}\sum_{(x,z) \in \mathcal{R}_{X,Z}}  \Bigg\{\int_0^v\frac{S_{V\mid X,Z}(u\mid x,z) \sum_{i=1}^n d\left(\xi_i(u,x,z)\right)}{S_{V}(u)}\Bigg\} +o_p(n^{-1/2}).
\end{align*}
Clearly, this is equal to
\begin{align}
    &  n^{-1} \sum_{i=1}^n \sum_{(x,z) \in \mathcal{R}_{X,Z}} \Bigg\{\int_0^v  \frac{N_{i,x,z}(u)dS_{V,1\mid X,Z}(u\mid x,z)}{S_{V\mid X,Z}(u\mid x,z)S_V(u)}-\frac{dN^1_{i,x,z}(u)}{S_{V}(u)}\Bigg\} +o_p(n^{-1/2}). \label{almost there}
\end{align}
Note that $$dS_{V,1\mid X,Z}(u\mid x,z) = -f_{V,1\mid X,Z}(u\mid x,z)du  = -f_{V,1}(u)\frac{S_{V_C\mid X,Z}(u\mid x,z)}{S_{V_C}(u)}du,$$
such that \eqref{almost there} is equal to
\begin{align*}
    & n^{-1}  \sum_{i=1}^n \sum_{(x,z) \in \mathcal{R}_{X,Z}} \Bigg\{\int_0^v  \frac{N_{i,x,z}(u)dS_{V,1}(u)}{S_V(u)^2}-\frac{dN^1_{i,x,z}(u)}{S_{V}(u)}\Bigg\} +o_p(n^{-1/2}),
\end{align*}
which exactly compensates the first two terms of  \eqref{remaindervhat}, meaning that $$\log\left(1-\widehat{F}_{\widehat{V}_T}(v)\right) - \log(1-v) = o_p(n^{-1/2}),$$ uniformly in $v \in \mathcal{I}$. Using a Taylor expansion, the result follows. We continue with the proof of $(ii)$. Using Lemmas \ref{lemmabigO} and \ref{lemmaremainder0}, we can follow the same steps as the proof of $(i)$ to show that
\begin{align*}
& \log\left(1-\widehat{F}_{\widehat{V}_T\mid X,W,Z}(v\mid x,w,z)\right) - \log\left(1-F_{V_T\mid X,W,Z}(v\mid x,w,z)\right) \nonumber \\& \quad =-\int_0^v\frac{\widehat{S}_{\widehat{V}\mid X,W,Z}(u\mid x,w,z)}{S_{V\mid X,W,Z}(u\mid x,w,z)^2}dS_{V,1\mid X,W,Z}(u\mid x,w,z)+\int_0^v\frac{d\widehat{S}_{\widehat{V},1\mid X,W,Z}(u\mid x,w,z)}{S_{V\mid X,W,Z}(u\mid x,w,z)} + o_p(n^{-1/2}), 
\end{align*}
 uniformly in $v,x,w$ and $z$. 
 Using Lemma \ref{Lemmadosnkerclass}, it follows that this can be written as
    \begin{align*}
     & \widehat{p}^{-1}_{X,W,Z}(x,w,z)\Bigg\{-\int_0^v\frac{\widehat{S}_{V,X,W,Z}(u,x,w,z)}{S_{V\mid X,W,Z}(u \mid x,w,z)^2}dS_{V,1\mid X,W,Z}(u \mid x,w,z) \\ & \quad +\int_0^v\frac{d\widehat{S}_{V,X,W,Z,1}(u,x,w,z)}{S_{V\mid X,W,Z}(u \mid x,w,z)}  -\int_0^v\frac{S_{\widehat{V},X,W,Z}(u,x,w,z)}{S_{V\mid X,W,Z}(u \mid x,w,z)^2}dS_{V,1\mid X,W,Z}(u \mid x,w,z)\\ & \quad +\int_0^v\frac{dS_{\widehat{V},X,W,Z,1}(u , x,w,z)}{S_{V\mid X,W,Z}(u \mid x,w,z)}\Bigg\} +o_p(n^{-1/2}),
\end{align*}
uniformly in $(v,x,w,z) \in \mathcal{I} \times \mathcal{R}_{X,W,Z}$. Using a similar Taylor expansion as in the proof of $(i)$ and Theorem 1 by \cite{lo1986product}, it follows that

  \begin{align*}
  & \log\left(1-\widehat{F}_{\widehat{V}_T\mid X,W,Z}(v\mid x,w,z)\right) - \log\left(1-F_{V_T\mid X,W,Z}(v\mid x,w,z)\right)  
     \\ & \quad = \log\left(1-\widehat{F}_{V_T\mid X,W,Z}(v\mid x,w,z)\right)-\log\left(1-F_{V_T\mid X,W,Z}(v\mid x,w,z)\right)  \\ & \qquad + \Bigg\{  \int_0^v\frac{\frac{f_{V \mid X,W,Z}(u\mid x,w,z)}{f_{V_T\mid X,Z} (u\mid x,z)}\big[\widehat{F}_{V_T \mid X,Z}(u \mid x,z) - F_{V_T \mid X,Z}(u \mid x,z)\big]}{S_{V\mid X,W,Z}(u\mid x,w,z)^2} f_{V,1\mid X,W,Z}(u \mid x,w,z)du \\ & \qquad +\int_0^v\frac{d\left(\frac{f_{V,1 \mid X,W,Z}(u\mid x,w,z)}{f_{V_T\mid X,Z} (u\mid x,z)}\big[\widehat{F}_{V_T \mid X,Z}(u \mid x,z) - F_{V_T \mid X,Z}(u \mid x,z)\big]\right)}{S_{V\mid X,W,Z}(u \mid x,w,z)}\Bigg\} \times \left(1+o_p(1)\right) +o_p(n^{-1/2}),
\end{align*}
uniformly in $(v,x,w,z) \in \mathcal{I} \times \mathcal{R}_{X,W,Z}$. Under Assumption \ref{AindepTC}, this is equal to
  \begin{align*}
  &  \log\left(1-\widehat{F}_{V_T\mid X,W,Z}(v\mid x,w,z)\right)-\log\left(1-F_{V_T\mid X,W,Z}(v\mid x,w,z)\right)  \\ & \quad + \left\{\frac{f_{V_T\mid X,W,Z}(v\mid x,w,z)\big[\widehat{F}_{V_T \mid X,Z}(v \mid x,z) - F_{V_T \mid X,Z}(v \mid x,z)\big]}{f_{V_T}(v)S_{V_T\mid X,W,Z}(v \mid x,w,z)} \right\}\times\left(1+o_p(1)\right)+o_p(n^{-1/2}),
\end{align*}
uniformly in $(v,x,w,z) \in \mathcal{I} \times \mathcal{R}_{X,W,Z}$. Using Theorem 1.2 by \cite{gill1983large} and following similar steps to the proof of $(i)$, we now have that
  \begin{align*}
  & \log\left(1-\widehat{F}_{\widehat{V}_T\mid X,W,Z}(v\mid x,w,z)\right) - \log\left(1-F_{V_T\mid X,W,Z}(v\mid x,w,z)\right)  
     \\ & \quad =  \log\left(1-\widehat{F}_{V_T\mid X,W,Z}(v\mid x,w,z)\right)-\log\left(1-F_{V_T\mid X,W,Z}(v\mid x,w,z)\right)  \\ & \qquad + n^{-1}\sum_{i=1}^n\frac{f_{V_T\mid X,W,Z}(v\mid x,w,z)S_{V_T}(v)\xi_i(v,x,z)}{f_{V_T}(v)S_{V_T\mid X,W,Z}(v \mid x,w,z)p_{X,Z}(x,z)} +o_p(n^{-1/2}),
\end{align*}
uniformly in $(v,x,w,z) \in \mathcal{I} \times \mathcal{R}_{X,W,Z}$ with $\xi_i(v,x,z)$ defined in the proof of $(i)$. Using Theorem 1 by \cite{lo1986product}, we get that
  \begin{align*}
  & \log\left(1-\widehat{F}_{\widehat{V}_T\mid X,W,Z}(v\mid x,w,z)\right) - \log\left(1-F_{V_T\mid X,W,Z}(v\mid x,w,z)\right)  
     \\ & \quad =  n^{-1}\sum_{i=1}^n \left\{\frac{f_{V_T\mid X,W,Z}(v\mid x,w,z)S_{V_T}(v)\xi_i(v,x,z)}{f_{V_T}(v)S_{V_T\mid X,W,Z}(v \mid x,w,z)p_{X,Z}(x,z)}  - \frac{\zeta_i(v,x,w,z)}{p_{X,W,Z}(x,w,z)}\right\}\\ & \qquad +o_p(n^{-1/2}),
\end{align*}
uniformly in $(v,x,w,z) \in \mathcal{I} \times \mathcal{R}_{X,W,Z}$, where $$\zeta_i(v,x,w,z) = \int^v_0\frac{N_{i,x,w,z}(u)dS_{V,1\mid X,W,Z}(u\mid x,w,z)}{S_{V\mid X,W,Z}(u\mid x,w,z)^2}-\int^v_0\frac{dN^1_{i,x,w,z}(u)}{S_{V\mid X,W,Z}(u\mid x,w,z)},$$
with $N_{i,x,w,z}(v)= \mathbbm{1}(V_i\geq v,X_i=x,W_i = w,Z_i = z)$ and $N^1_{i,x,w,z}(v)=\mathbbm{1}(V_i\geq v, \Delta_i = 1,X_i=x,W_i=w,Z_i = z)$. Using a Taylor expansion, we have that
\begin{align*}
    & \widehat{F}_{\widehat{V}_T\mid X,W,Z}(v\mid x,w,z)-F_{V_T\mid X,W,Z}(v\mid x,w,z) \\ & \quad =  n^{-1}\sum_{i=1}^n  \left\{\frac{S_{V_T\mid X,W,Z}(v\mid x,w,z)\zeta_i(v,x,w,z)}{p_{X,W,Z}(x,w,z)}-\frac{f_{V_T\mid X,W,Z}(v\mid x,w,z)S_{V_T}(v)\xi_{i}(v,x,z)}{f_{V_T}(v)p_{X,Z}(x,z)}\right\} \\ & \qquad  +o_p(n^{-1/2}),
\end{align*}
uniformly in $(v,x,w,z) \in \mathcal{I} \times \mathcal{R}_{X,W,Z}$. Note that this can be rewritten as 
\begin{align*}
    & \widehat{F}_{V_T\mid X,W,Z}(v\mid x,w,z)-F_{V_T\mid X,W,Z}(v\mid x,w,z)  \\ & \quad -\frac{f_{V_T\mid X,W,Z}(v\mid x,w,z)}{f_{V_T}(v)}\left[\widehat{F}_{V_T\mid X,Z}(v\mid x,z)-F_{V_T\mid X,Z}(v\mid x,z)\right]   +o_p(n^{-1/2}).
\end{align*}
Using again Theorem 1.2 by \cite{gill1983large}, the proof is completed.
\end{proof}

\section{Main proofs}
Using the technical results obtained in Appendix \ref{secapplemma}, we are now ready to prove the main theorems stated in Section \ref{asymptoticsection}.

\subsection{Proof of Theorem 1}
    We start by noticing that 
    \begin{align}
    & \widehat{p}_{Z\mid X,W}(z\mid x,w)-p_{Z\mid X,W}(z\mid x,w) \nonumber \\ & = n^{-1}\sum_{i=1}^n\frac{\mathbbm{1}(X_i=x,W_i=w)}{p_{X,W}(x,w)}\left[\mathbbm{1}(Z_i=z)-p_{Z\mid X,W}(z\mid x,w)\right] + o_p(n^{-1/2}),\label{linearpzxw}
    \end{align}
    uniformly in $(x,w,z) \in \mathcal{R}_{X,W,Z}$. Following similar reasoning as the proof of Theorem 1 by \cite{akritas2003estimation}, we have that
    \begin{align*}
        & \widehat{F}_{\widehat{V}_T\mid X,W}(v\mid x, w) - F_{V_T\mid X,W}(v\mid x, w) \\ & \quad = \sum_{z \in\mathcal{R}_Z}\bigg\{\left[\widehat{F}_{\widehat{V}_T\mid X,W,Z}(v\mid x,w,z)-F_{V_T\mid X,W,Z}(v\mid x,w,z)\right]p_{Z\mid X,W}(z\mid x,w) \\ & \qquad + F_{V_T\mid X,W,Z}(v\mid x,w,z)\left[\widehat{p}_{Z\mid X,W}(z\mid x,w)-p_{Z\mid X,W}(z\mid x,w)\right]\bigg\} + o_p(n^{-1/2}),
    \end{align*}
    uniformly in $(v,x,w) \in \mathcal{I} \times \mathcal{R}_{X,W}$. Using Lemma \ref{lemmaasymptotrep} and \eqref{linearpzxw}, we get that
    \begin{align*}
        & \widehat{F}_{\widehat{V}_T\mid X,W}(v\mid x, w) - F_{V_T\mid X,W}(v\mid x, w)  = n^{-1}\sum_{i=1}^n \frac{\omega_i(v,x,w)}{p_{X,W}(x,w)} + o_p(n^{-1/2}), 
    \end{align*}
    uniformly in $(v,x,w) \in \mathcal{I} \times \mathcal{R}_{X,W}$, where
    \begin{align*}
         \omega_i(v,x,w) & = S_{V_T\mid X,W,Z}(v\mid x,w,Z_i)\zeta_i(v,x,w,Z_i)-p_{W\mid V_T,X,Z}(w\mid v,x,Z_i)S_{V_T}(v)\xi_{i}(v,x,Z_i)  \\ & \quad +\mathbbm{1}(X_i=x,W_i=w)\left[F_{V_T\mid X,W,Z}(v\mid x,w,Z_i)-F_{V_T\mid X,W}(v\mid x,w)\right].
    \end{align*}
    Using Lemma \ref{lemmaasymptotrep} again, we now have that
     \begin{align*}
     \widehat{F}_{\widehat{V}_T\mid X,W}(v\mid x,w) - \widehat{F}_{\widehat{V}_T}(v) = n^{-1}\sum_{i=1}^n \frac{\omega_i(v,x,w)}{p_{X,W}(x,w)} + F_{V_T\mid X,W}(v\mid x,w) -v +o_p(n^{-1/2}),  
    \end{align*}
    uniformly in $(v,x,w) \in \mathcal{I} \times \mathcal{R}_{X,W}$.
\QEDB
\subsection{Proof of Theorem 2}
To show the weak convergence of the process $$\left\{n^{1/2}\left[\widehat{F}_{\widehat{V}_T\mid X,W}(v\mid x,w) - \widehat{F}_{\widehat{V}_T}(v)\right] : (v,x,w) \in \mathcal{I} \times \mathcal{R}_{X,W} \right\},$$ under $H_0$, we will show that the function class
\begin{align*}
\mathcal{M}  = & \Big\{ (V,\Delta,X,W,Z) \to \frac{S_{V_T\mid X,W,Z}(v\mid x,w,Z)\zeta(v,x,w,V,\Delta,X,W,Z)}{p_{X,W}(x,w)} \\ & \quad  -\frac{p_{W\mid V_T,X,Z}(w\mid v,x,Z)(1-v)\xi(v,x,V,\Delta,X,Z)}{p_{X,W}(x,w)}   \\ & \quad +\frac{\mathbbm{1}(X=x,W=w)}{p_{X,W}(x,w)}\left[F_{V_T\mid X,W,Z}(v\mid x,w,Z)-v\right] : (v,x,w) \in \mathcal{I} \times \mathcal{R}_{X,W}\Big\},
\end{align*}
is Donsker, where
\begin{align*}
    \zeta(v,x,w,V,\Delta,X,W,Z) & = \int^v_0\frac{\mathbbm{1}(V\geq u,X=x,W=w)dS_{V,1\mid X,W,Z}(u\mid x,w,Z)}{S_{V\mid X,W,Z}(u\mid x,w,Z)^2} \\ & \quad +\frac{\mathbbm{1}(V\leq v, \Delta = 1,X=x,W = w)}{S_{V\mid X,W,Z}(V\mid x,w,Z)},
\end{align*}
and
\begin{align*}
    \xi(v,x,V,\Delta,X,Z) & = \int^v_0\frac{\mathbbm{1}(V\geq u,X=x)dS_{V,1\mid X,Z}(u\mid x,Z)}{S_{V\mid X,Z}(u\mid x,Z)^2} +\frac{\mathbbm{1}(V\leq v, \Delta = 1,X=x)}{S_{V\mid X,Z}(V\mid x,Z)}.
\end{align*}
Note that 
$$
\mathcal{M} = \bigcup_{(\tilde{x},\tilde{w}) \in \mathcal{R}_{X,W}} \mathcal{M}^{\tilde{x},\tilde{w}},
$$
where
\begin{align*}
\mathcal{M}^{\tilde{x},\tilde{w}} =& \Big\{ (V,\Delta,X,W,Z) \to \frac{S_{V_T\mid X,W,Z}(v\mid \tilde{x},\tilde{w},Z)\zeta(v,\tilde{x},\tilde{w},V,\Delta,X,W,Z)}{p_{X,W}(\tilde{x},\tilde{w})} \\ & \quad -\frac{p_{W\mid V_T,X,Z}(\tilde{w}\mid v,\tilde{x},Z)(1-v)\xi(v,\tilde{x},V,\Delta,X,Z)}{p_{X,W}(\tilde{x},\tilde{w})}   \\ & \quad +\frac{\mathbbm{1}(X=\tilde{x},W=\tilde{w})}{p_{X,W}(\tilde{x},\tilde{w})}\left[F_{V_T\mid X,W,Z}(v\mid \tilde{x},\tilde{w},Z)-v\right] : v \in \mathcal{I}\Big\}.
\end{align*}
Since the finite union of Donsker classes is again Donsker, it is sufficient to show that $\mathcal{M}^{\tilde{x},\tilde{w}}$ is Donsker for all $(\tilde{x},\tilde{w}) \in \mathcal{R}_{X,W}$. Using the same reasoning as in the proof of Lemma \ref{Lemmadosnkerclass}, $\mathcal{M}^{\tilde{x},\tilde{w}}$ is Donsker if 
\begin{align*}
\mathcal{M}^{\tilde{x},\tilde{w}}_{\tilde{X},\tilde{W},\tilde{Z}} =& \Big\{ (V,\Delta) \to \frac{S_{V_T\mid X,W,Z}(v\mid \tilde{x},\tilde{w},\tilde{Z})\zeta(v,\tilde{x},\tilde{w},V,\Delta,\tilde{X},\tilde{W},\tilde{Z})}{p_{X,W}(\tilde{x},\tilde{w})} \\ & \quad -\frac{p_{W\mid V_T,X,Z}(\tilde{w}\mid v,\tilde{x},\tilde{Z})(1-v)\xi(v,\tilde{x},V,\Delta,\tilde{X},\tilde{Z})}{p_{X,W}(\tilde{x},\tilde{w})}  \\ & \quad +\frac{\mathbbm{1}(\tilde{X}=\tilde{x},\tilde{W}=\tilde{w})}{p_{X,W}(\tilde{x},\tilde{w})}\left[F_{V_T\mid X,W,Z}(v\mid \tilde{x},\tilde{w},\tilde{Z})-v\right] : v \in \mathcal{I}\Big\},
\end{align*}
is Donsker for all $(\tilde{X},\tilde{W},\tilde{Z}) \in \mathcal{R}_{X,W,Z}$. Noticing that for all $(\tilde{x},\tilde{w},\tilde{X},\tilde{W},\tilde{Z})  \in \mathcal{R}_{X,W} \times \mathcal{R}_{X,W,Z}$ we have that $\mathcal{M}^{\tilde{x},\tilde{w}}_{\tilde{X},\tilde{W},\tilde{Z}}$ is a linear combination of functions that are monotone and uniformly bounded in $v \in \mathcal{I}$, the weak convergence follows from Theorem 2.7.5 by \cite{van1996weak} and Lemma 9.25 by \cite{kosorok2008introduction}. The convergence of $T_n^{KS}$ follows from the continuous mapping theorem. Moreover, the convergence of $T_n^{CM}$ follows from Assumption \ref{ascoonvpi} and using similar arguments as in the proof of Corollary 3.1 by \cite{feve2018estimation}.
\QEDB

\end{document}